\definecolor{grey}{gray}{0.9}
\definecolor{lgrey}{gray}{1}
\definecolor{white}{rgb}{1,1,1}
\definecolor{dgreen}{HTML}{228B22}
\newlength\mylength
\theoremstyle{plain}
\theoremstyle{definition}
\newtheorem{theorem}{Theorem}[section]
\newcommand{\mycaption}[2]{\caption{\textbf{#1}. {#2}}}
\newcommand{\sref}[1]{\S\ref{#1}}
\newcommand{\vheading}[1]{\vspace{0.05in}\noindent\textbf{#1}}
\newcommand{\myx}{$\times$\xspace}
\newcommand{\minio}{MinIO\xspace}
\newcommand{\sysname}{Synergy\xspace}
\newcommand{\jsplit}{\textit{split}\xspace}
\newcommand{\fair}{GPU-proportional\xspace}
\newcommand{\systune}{Synergy-\textsc{Tune}\xspace}
\newcommand{\sysopt}{Synergy-\textsc{Opt}\xspace}
\newcommand{\sysgreedy}{Synergy-\textsc{Greedy}\xspace}
\newcommand{\ie}{\textit{i.e.,}\xspace}
\newcommand{\eg}{\textit{e.g.,}\xspace}
\newcommand{\upto}{up to\xspace}
\newcommand{\begincompactitemize}{\begin{itemize}[noitemsep,topsep=0pt,parsep=0pt,partopsep=0pt]}
\newcommand{\revised}[1]{\textcolor{black}{#1}}
	\newcommand{\grumbler}[3]{}
	\newcommand{\grumbler}[3]{\xspace\textcolor{#3}{\bf #1: #2}}
\newcommand{\ra}[1]{\renewcommand{\arraystretch}{#1}}
\newcommand{\jm}[1]{\grumbler{Jayashree}{#1}{magenta}}
\newcommand{\todo}[1]{\xspace\textcolor{red}{\bf #1}}
\newcommand{\ut}{{\large$^\dag$}}
\newcommand{\msr}{{\large$^\star$}}
\newcommand{\vmware}{{\large$^\ddag$}}
\begin{document}
\date{}

\title{\Large \bf Synergy: Resource Sensitive DNN Scheduling in Multi-Tenant Clusters}

\author{
	Jayashree Mohan\msr\thanks{\footnotesize{Work done as a MSR intern in Project Fiddle.}}\hspace{0.07in}, Amar Phanishayee\msr, Janardhan Kulkarni\msr, Vijay Chidambaram\ut\vmware\\
	\rm{\textit{\msr Microsoft Research\hspace{0.04in} \ut University of Texas at Austin\hspace{0.04in} \vmware VMware Research}}
}

\include{stdinc}

	\maketitle
 Training  Deep Neural Networks (DNNs) is a popular workload in both enterprises and cloud data centers. Existing schedulers for DNN training consider GPU as the dominant resource and allocate other resources such as CPU and memory proportional to the number of GPUs requested by the job. Unfortunately, these schedulers do not consider the impact of a job's sensitivity to allocation of CPU and memory resources. In this work, we propose \sysname, a 
 resource-sensitive scheduler for shared GPU clusters. \sysname infers the sensitivity of DNNs to different resources using optimistic profiling; some jobs might benefit from more than the \fair allocation and some jobs might not be affected by less than \fair allocation. \sysname performs such multi-resource workload-aware assignments across a set of jobs scheduled on shared multi-tenant clusters using a new near-optimal online algorithm.  Our experiments show that workload-aware CPU and memory allocations can improve average job completion time by upto 3.4\myx, by better utilizing existing cluster resources, compared to traditional \fair scheduling.


	\section{Introduction}
\label{sec-intro}

The widespread popularity of Deep Neural Networks (DNNs) makes training such models an important workload in both enterprises and cloud data centers.
Training a DNN is resource-intensive and time-consuming. Enterprises typically setup large multi-tenant clusters, with expensive hardware accelerators like GPUs, to be shared by several users and production groups~\cite{jeon2018multi, antman}. 
In addition to the model-specific parameters and scripts, jobs specify their GPU demand before being scheduled to run on available servers. Jobs are scheduled and managed either using traditional big-data schedulers, such as Kubernetes~\cite{kubernetes} or YARN~\cite{yarn}, or using modern schedulers that exploit DNN job characteristics for better performance and utilization~\cite{gavel, gandiva, themis, tiresias, allox, optimus, gandivafair}. These DNN schedulers decide how to allocate GPU resources to many jobs while implementing complex cluster-wide scheduling policies to optimize for objectives such as average job completion times (JCT), makespan, or user-level fairness. 

\begin{figure}[!htb]
	\centering
	\includegraphics[width=.4\textwidth]{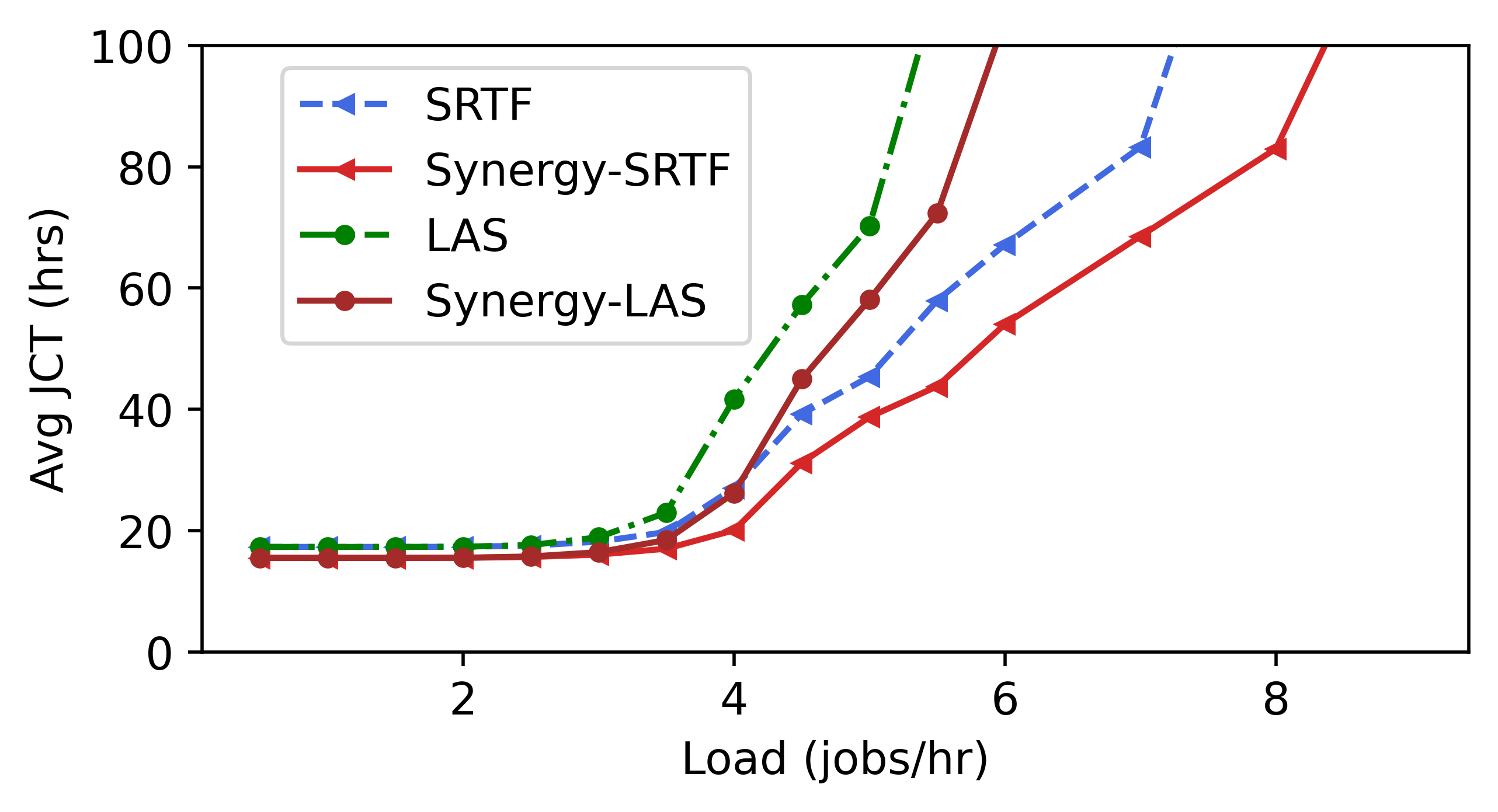}%
	\vspace{-1em}
	\mycaption{Average JCT with \sysname}{\sysname is able to significantly reduce average JCT and support higher load for different scheduling policies (shown here on a cluster of 128 GPUs for a Philly-derived trace as we vary load~\cite{philly}). }
	\label{fig-mot}
	\vspace{-1em}
\end{figure}
Current DNN cluster schedulers assume GPUs to be the dominant resource in the scheduling task~\cite{jeon2018multi, gavel, gandiva, themis, tiresias, allox, optimus, gandivafair}; i.e., a user requests a fixed number of GPUs for her DNN job, and when the requested number of GPUs are all available, the job is scheduled to run. Other resources such as CPU and memory are allocated proportional to the number of GPUs assigned to the job (\textit{\fair} allocation). 

However, we identify an important property of DNN training jobs that \fair allocation is unable to exploit: DNNs exhibit varied sensitivity to the amount of auxiliary resources like CPU and  memory allocated to the job. Prior work has shown that ingesting data for ML training jobs, i.e., reading data from storage to memory, and pre-processing them at the CPU is computationally expensive, thereby resulting in \emph{data stalls} in both research~\cite{coordl} and industry scale training at large enterprises such as Google~\cite{tfdata} and Facebook~\cite{fbstalls}. For instance, some image and video recognition models achieve \upto 3\myx speedup by overcoming \emph{data stalls} (\sref{sec-motivation}) when the CPUs allocated exceed their \fair share, while other models like GNMT are unaffected when the CPUs assigned are less than \fair share. 

Our main insight here is that allocating these auxiliary resources in a workload-aware fashion, rather than the traditional \fair allocation can significantly improve performance by effectively utilizing \textit{cluster-wide} resources. Based on this insight, we propose \emph{\sysname}, a resource-sensitive scheduler for \revised{homogeneous}, multi-tenant GPU clusters. Figure~\ref{fig-mot} shows the average job completion time (JCT) in the cluster as we vary load, for two scheduling policies; \sysname's resource-sensitive allocation is able to significantly improve average JCT in the cluster and sustain a higher load compared to \fair allocation.



 \sysname profiles the sensitivity of DNNs to  auxiliary resources and allocates them disproportionately among jobs rather than using traditional \fair allocation. While doing so, \sysname ensures that a job gets less than \fair auxiliary resources \textit{only} if such an allocation does not degrade the job throughput compared to a \fair allocation. Such allocation enables \sysname to mitigate data stalls in several models, thereby significantly increasing the overall cluster throughput.

Efficiently exploiting the heterogeneity in resource sensitivity among DNN jobs raises two important problems which have not been tackled by prior work:
\begincompactitemize
\item What is the ideal resource requirement for each job (with fixed GPU demand) and how can this be determined with low overhead? 
\item How should we pack these jobs onto servers along multiple resource dimensions efficiently, especially when we can tune the job's demand for these resources?
\end{itemize}

\vheading{Optimistic profiling}. 
\sysname exploits the predictability of DNN computation to measure the job throughput as we vary the amount of CPU and memory allocated to the job. This is performed offline by the \sysname scheduler, prior to job execution on the cluster. However, profiling all possible combinations of CPU, and memory values is computationally expensive. Therefore, \sysname introduces optimistic profiling; it empirically profiles the job throughput for varying CPU allocations, assuming maximum memory allocation. It then \textit{analytically} estimates the job throughput for all combinations of CPU and memory.
A key insight that makes such analytical modelling feasible is the predictable nature of job performance to memory allocation when using DNN-aware caching like \minio~\cite{coordl} that guarantees a certain cache hit rate. We show in ~\sref{sec-profile} that our optimistically profiled model performance closely resembles the true empirical values, while significantly reducing profiling time (by up to 30\myx). Using these profiles, \sysname identifies the best resource allocation 
beyond which the job throughput has diminishing returns.

\vheading{Scheduling mechanism}. \sysname makes a round-based scheduling decision similar to prior DNN schedulers~\cite{gavel}. 
In each round (say 5 minutes), we identify the set of jobs that are runnable in the cluster using a scheduling policy such as FIFO~\cite{spark, yarn}, SRTF~\cite{srtf}, LAS~\cite{las, tiresias}, FTF~\cite{themis}, etc. 
\sysname's scheduling mechanism then packs these jobs among available servers in the cluster along all resource dimensions identified in the profiling phase.  This is analogous to multi-dimensional bin-packing problem, which is NP-Hard~\cite{np_hard}, and hence requires approximate solutions. But unlike prior work in big-data scheduling which tackles the problem of multi-dimensional bin-packing with fixed resource demands (for \eg Tetris~\cite{tetris}, DRF~\cite{drf}), \sysname has to contend with fungible resource demands.
This introduces two challenges that need to be solved in tandem: First to find an optimal partition of CPU and memory among jobs to maximize throughput while ensuring fair allocations (every job's throughput is at least that of \fair allocation), and second, a feasible packing of these resources among jobs.


In this paper, we propose two effective algorithms to enable such fungible multi-dimensional bin-packing.
Our first algorithm, \sysopt, is formulated as a linear program and enables determining an upper-bound on achievable throughput by an optimal solution for a given workload trace.
However, we find that \sysopt is impractical for two reasons: (1) it is computationally expensive as we scale cluster size, 
and (2) it produces fractional GPU allocations that cannot be achieved in real deployments. Nevertheless, its solution provides an aspirational optimal goal that we can use to measure the efficacy of any practical solution.
The second algorithm, \systune, is fast and near-optimal (within 10\% of \sysopt in evaluation). If a job to be scheduled does not fit in the cluster along all the resource dimensions, we revert the job demands to \fair if its current demands are above it. If the job's demands are already \fair or below, then we find a suitable job in the cluster with higher than \fair allocation, which is then reverted to \fair.  \systune also outperforms simpler greedy approaches (\sysgreedy) that recursively pack jobs along multiple resource dimensions using a first-fit allocation strategy~\cite{first_fit}.

We implement a prototype of \sysname and an accompanying event-driven simulator in Python. \sysname transparently communicates with the DNN job using a thin iterator API, that is a wrapper around the existing data iterator, thereby requiring minimal code changes to the DNN job script. Across various scheduling policies, and workload traces, we show that \sysname improves cluster objectives such as average JCT by \upto 1.5\myx on a physical cluster of 32 GPUs. On a large simulated cluster of up to 512 GPUs, \sysname improves average JCT by \upto 3.4\myx. \sysname is open sourced at \href{https://github.com/msr-fiddle/synergy}{https://github.com/msr-fiddle/synergy}.

\begin{figure*}[!tbh]

  \subfloat[CPU sensitivity\label{fig-cpu-sensitivity}]{{\includegraphics[width=0.7\textwidth]{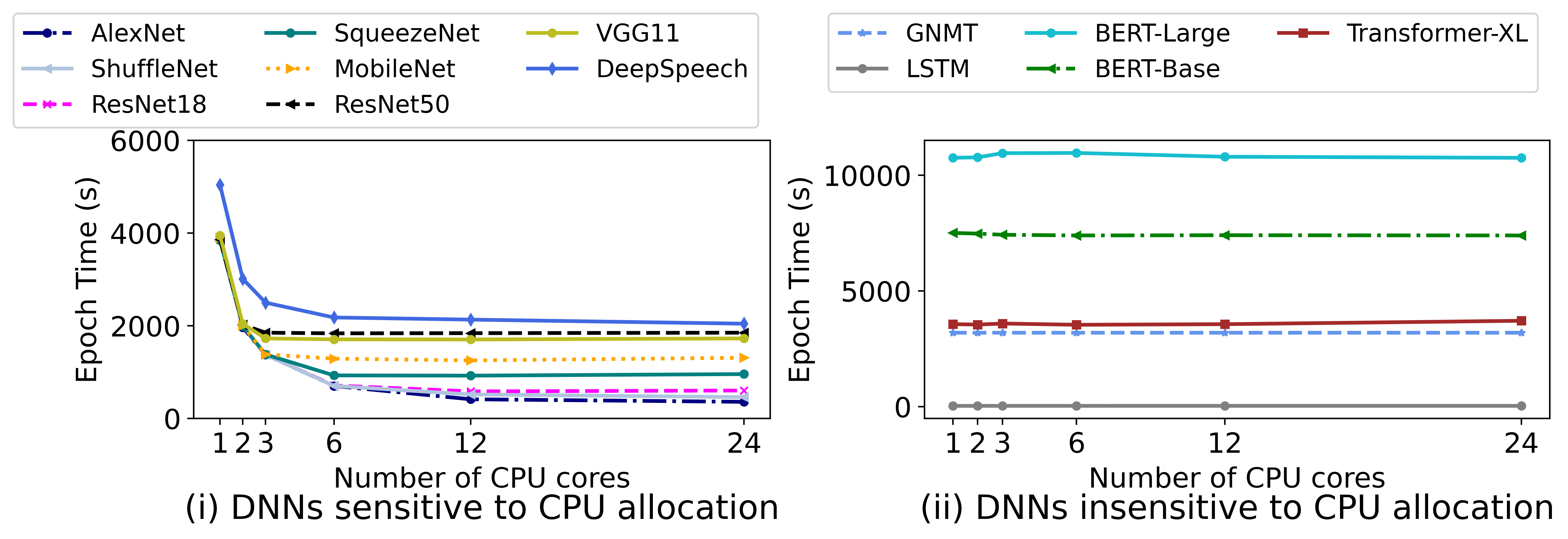} }}
  \quad
\subfloat[GPU VM SKUs\label{fig-mot-sku}]{
\ra{0.8}
    \begin{tabular}[b]{cc}
    \toprule
      \textbf{CPU:} & \multirow{ 2}{*}{\textbf{SKU}} \\
       \textbf{GPU} & \\
      \midrule
      \multirow{ 2}{*}{3:1} & NVIDIA DGX-2\\
         & Internal servers at X\\
         \midrule
      4:1 & AWS p3.16xlarge \\
      \midrule
      \multirow{ 2}{*}{5:1} & NVIDIA DGX-1 \\
            & Azure NDv2 \\
      \midrule
      6:1 & Azure NC24s\_v3 \\
    
      \bottomrule
    \end{tabular}
}

    \mycaption{CPU sensitivity}{This graph plots the epoch time for DNNs as we vary the CPU:GPU ratio for single-GPU training.  Some jobs such as Transformers need as few as 1 CPU core per GPU to achieve maximum training speed; others like ShuffleNet need more than 12 CPU cores per GPU to eliminate data stalls. State-of-the-art GPU VMs have a CPU:GPU ratio as few as 3.}

\label{fig-mot-cpu}
\end{figure*}

In summary, our paper makes the following contributions.
\begin{itemize}[leftmargin=*,,noitemsep,partopsep=0pt,topsep=0.2em,parsep=0pt]
\item We identify the importance and need for resource-sensitive scheduling of DNN jobs in multi-tenant GPU clusters (\sref{sec-motivation}).
\item We present \sysname, a resource-sensitivity aware scheduler that optimistically profiles the job's resource demands and performs disproportionate allocations such that no job achieves lower than \fair throughput (\sref{sec-design}).
\item We present a heuristic scheduling mechanism \systune, that maps the allocations calculated by the profiler onto the cluster, while better utilizing the resources compared to a \fair allocation (\sref{sec-sched-mechanism}).
\item In extensive experimentation on physical and simulated clusters, \sysname's techniques improve average JCT by \upto 3.4\myx, thus supporting a higher input load (\sref{sec-eval}).
\end{itemize}

	
\section{Background and Motivation}
\label{sec-motivation}
In this section, we briefly describe DNN scheduling, introduce the terminology used in the rest of the paper, and motivate resource-sensitive DNN cluster scheduling.


\vheading{Scheduling ML training jobs in a cluster}. Training a ML model is a resource intensive and long-running task (order of hours to days). Collocating ML training workloads in a shared, multi-tenant cluster is a very common setup in several large organizations, for both research and production~\cite{gandiva, themis, tiresias, optimus, gavel}. Our work targets state-of-the-art multi-tenant clusters similar to the ones published by prior large-scale studies by organizations like Microsoft~\cite{jeon2018multi} and Alibaba~\cite{antman}. These clusters use on-premise servers or cloud VMs with pre-defined GPU, CPU, and memory resources.  The cluster itself is shared by multiple users and jobs, and each server can host more than one job each with varying resource usage (some heavy on CPU side pre-processing, while others heavy on GPU computation). For example, a server with 8 GPUs can host 8 single-GPU jobs from different users.

\vheading{Scheduling policy and mechanism}.
When jobs are submitted to a scheduler, a scheduling policy such as First In, First Out (FIFO)~\cite{spark, yarn}, Shortest Remaining Time First (SRTF)~\cite{srtf}, Least Attained Service (LAS)~\cite{las, tiresias}, or Finish Time Fairness (FTF)~\cite{themis} decides the set of jobs ($J$) to be run on the cluster. A scheduling mechanism then identifies where job $J$ should be run, and how much resources to allocate to the job. The GPU demand for a job is fixed (requested by the user), while the CPU and memory allocation is fungible. 

\vheading{\fair allocation.}
During DNN training, a minibatch of data is first fetched from storage to memory, where it is cached for subsequent accesses. It is then pre-processed at the CPU, and then copied over to the GPU for processing. Existing DNN schedulers~\cite{gandiva, gavel, themis, tiresias}, and those used in real-world GPU clusters~\cite{jeon2018multi, philly}, including recent schedulers that offer GPU elasticity~\cite{afs, pollux}, all allocate CPU and memory resources to a job using a \fair allocation. For instance, consider a server with 4 GPUs, 16 CPUs and 200 GB memory. If a job requests 1 GPU, then it is allocated 4 CPUs and 50GB memory. 

\begin{table*}[t!]
\ra{1}
\begin{minipage}[b]{.20\textwidth}
	\centering
	\begin{tabular}{ | l | l |}
		\hline
		Job & Model \\ \hline
		$J_1$ & ResNet18 \\ \hline
		$J_2$ & Audio-M5 \\ \hline
		$J_3$ & Transformer \\ \hline
		$J_4$ & GNMT \\ \hline
	\end{tabular}
	\captionof{table}{Example jobs}
	\label{tbl-motivation-ex-jobs}
\end{minipage}\qquad
\begin{minipage}[b]{.35\textwidth}
	\centering
	\begin{tabular}{ |c | c | c | c | c | }
		\hline
		Server & Job & GPU & CPU & Mem \\ \hline
		\multirow{2}*{$S_1$} & $J1$ & 4 & 12 & 250 \\ \cline{2-5}
		& $J2$ & 4 & 12 & 250 \\ \hline
		\hline
		\multirow{2}*{$S_2$} & $J3$ & 4 & 12 & 250 \\ \cline{2-5}
		&$J4$ & 4 & 12 & 250 \\ \hline
	\end{tabular} \\
	\captionof{table}{\fair allocation}
	\label{tbl-motivation-ex-sched-1}
\end{minipage}\qquad
\begin{minipage}[b]{.35\textwidth}
	\centering
	\begin{tabular}{ |c | c | c | c | c | }
		\hline
		Server & Job & GPU & CPU & Mem \\ \hline
		\multirow{2}*{$S_1$} & $J1$ & 4 & \textcolor{blue}{23} & \textcolor{blue}{400} \\ \cline{2-5}
		& $J3$ & 4 & \textcolor{red}{1} & \textcolor{red}{100} \\ \hline
		\hline
		\multirow{2}*{$S_2$} & $J2$ & 4 & 12 & \textcolor{blue}{450} \\ \cline{2-5}
		&$J4$ & 4 & 12 & \textcolor{red}{50} \\ \hline
	\end{tabular} \\
	\captionof{table}{Resource-sensitive allocation}
	\label{tbl-motivation-ex-sched-2}
\end{minipage}\qquad
\vspace*{-3em}
\end{table*}

\subsection{Motivation : Resource sensitivity}
\label{sec-res-sensitivity}

\vheading{Insight}. The main insight that motivates our work is that DNNs co-scheduled on a cluster exhibit different levels of sensitivity to CPU and memory allocations during training. 
Therefore, it is possible to improve the overall cluster utilization and efficiency by performing resource-sensitive allocations instead of the ubiquitously used \fair allocation. Prior work on characterization study of jobs in Microsoft’s Philly cluster~\cite{jeon2018multi} shows that CPU cycles are  under-utilized in multi-tenant clusters; we use this as motivation to show that we can \emph{exploit the disparity in resource requirements \textbf{across jobs} to improve overall cluster utilization without any hardware upgrades (storage, CPU, or memory)}.

Figure~\ref{fig-cpu-sensitivity} plots the per-epoch time for various DNNs when trained on a single GPU 
by varying the number of CPUs allocated to the job (ensuring that the dataset is fully cached for each job). 
Figure~\ref{fig-cpu-sensitivity}(i) shows that most image and speech models are sensitive to CPU allocations; smaller models like ShuffleNet and ResNet18 require 9--24 CPU cores per GPU to pre-process data items. However, state-of-the-art ML optimized servers and cloud GPU VMs have a CPU:GPU ratio as few as 3 as shown in Table~\ref{fig-mot-sku}~\cite{dgx2,aws-p3,dgx1,az-ndv2,az-ncv3,  refurbish}. Increasing the CPU:GPU ratio from 3 to 12 results in 3.1\myx faster training for AlexNet, and increasing it to 9 results in 2.3\myx faster training for ResNet18. On the other hand, most language models are insensitive to CPU allocations as shown in Figure~\ref{fig-cpu-sensitivity}(ii). This is because they have modest input data pre-processing requirements. Transformer models for example,  unlike image classification models, do not perform several unique data augmentation operations for each data item in every epoch~\cite{coordl}. 

Next, to understand the importance of memory allocations, we train two models; an image classification model - ResNet18 on  OpenImages~\cite{openimages} and a language model GNMT on WMT, with varying memory allocations on a server whose \fair share of memory per GPU is 62GB. 
We observe that GNMT is insensitive to memory allocation; even if only 20GB memory is allocated (which is the required process memory for training), the training throughput is unaffected. However, increasing the memory from 62GB (\fair allocation)  to 500GB (max) for ResNet18 speeds up training by almost 2\myx. This is because, language models like GNMT, and transformers are GPU compute bound. Therefore, fetching data items from storage if they are not available in memory does not affect training throughput. On the other hand, image and speech models benefit from larger DRAM caches. If a data item is not cached, the cost of fetching it from the storage device can introduce fetch stalls in training~\cite{coordl, tfdata, fbstalls}. 

\vheading{Takeaway}. \textit{When two jobs have to be scheduled on the same server, it is possible to co-locate a CPU-sensitive job with a CPU-insensitive one. This allows CPU allocation to be performed in a resource-sensitive manner rather than  \fair allocation. Similarly, it is always beneficial to pack a memory-sensitive job with an insensitive one, allowing disproportionate resource-sensitive sharing of memory to improve the aggregate cluster throughput. }

 \begin{figure}[!t]

  \centering 

  \includegraphics[width=0.95\columnwidth]{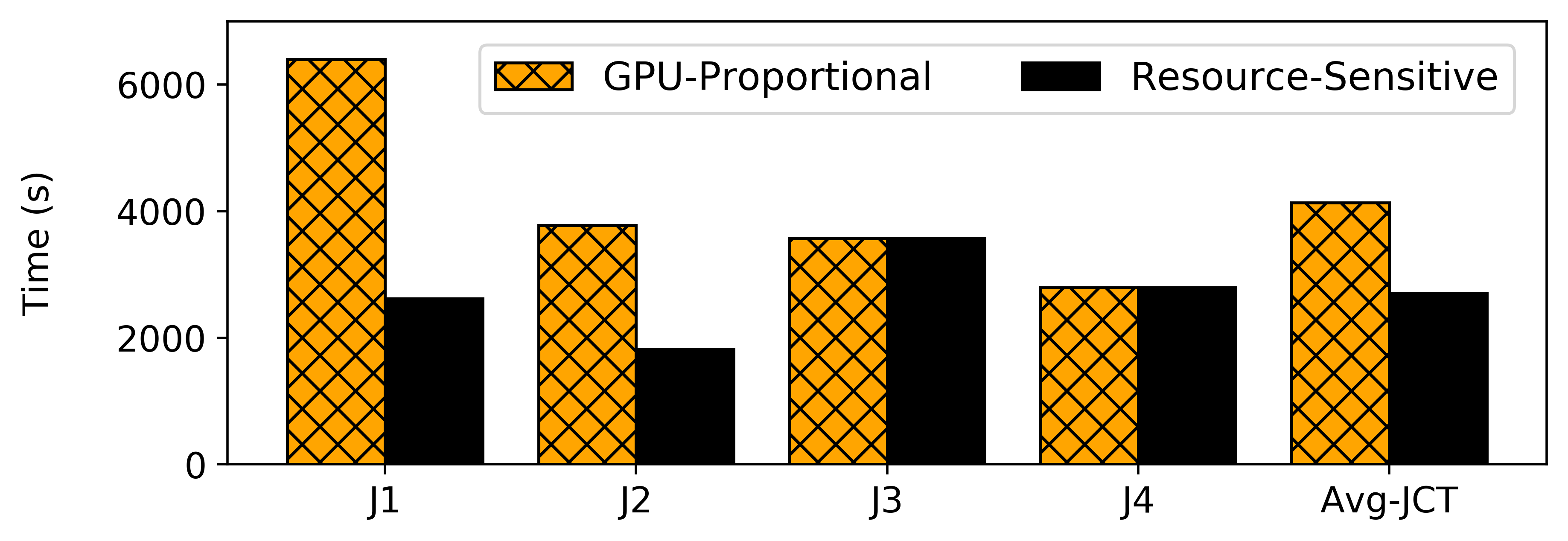} 
   
    \vspace{-1em}
    \mycaption{Resource sensitive scheduling}{We compare the runtime of the jobs with two different schedules; \fair and resource-sensitive. By allocating resources disproportionately, CPU and memory sensitive jobs see increased throughputs which reduces the average JCT by 1.5\myx.}

  \label{fig-motivation-ex}
  \vspace{-1em}
\end{figure}
\vheading{Example}. We now show how resource-sensitivity-aware scheduling can improve cluster efficiency using a simple example. We run the experiment on two physical servers each with 8 GPUs, 24 CPUs and 500GB DRAM (internal servers at a large cloud provider X). Let's say we have 4 jobs in the scheduling queue, each requesting 4 GPUs as shown in Table~\ref{tbl-motivation-ex-jobs}. 
We consider two different schedules; (1) \fair allocation and (2) resource-sensitive allocation. The results of these schedules are shown in Table~\ref{tbl-motivation-ex-sched-1} and Table~\ref{tbl-motivation-ex-sched-2}. Figure~\ref{fig-motivation-ex} compares the epoch time of each of these jobs in the two scenarios. The increased resource allocation to CPU and memory sensitive jobs in Schedule 2 
speeds up $J1$ and  $J2$ significantly, while leaving the runtime of $J3$ and $J4$ unaffected. The average JCT in the cluster thus drops by 1.5\myx due to resource-sensitive allocations.



\subsection{\sysname Scheduling Policies}
\sysname is not constrained to one particular scheduling policy, but is instead general enough to improve a wide range of scheduling policies (\eg LAS, FIFO, SRTF, FTF, etc), creating \sysname-augmented variants for all of them. The main challenge that \sysname addresses is, finding an efficient partition of available cluster CPU and memory among jobs to maximize throughput while ensuring that every job’s throughput is at least that of \fair allocation. \sysname’s innovation thus lies in exploiting the differences in resource sensitivity across jobs to improve overall cluster metrics. 

\subsection{\revised{Assumptions \& Limitations}}

\revised{In the context of this work, we explicitly highlight certain practical assumptions, many of which are derived directly from large multi-tenant clusters we  analyze - homogeneous clusters, fixed GPU allocation for the lifetime of a job, and the use of MinIO cache. \sysname’s design is not tied to these assumptions, but it aids in focused profiling (reducing the dimensionality of the search space). In a large scale, multi-tenant, production cluster, it is practical to assume that there are tens of thousands of accelerators per homogeneous cluster, and the GPU allocation for a job remains constant. While recent works explore scheduling DNN jobs in heterogeneous clusters~\cite{gavel, gandivafair, allox}, and GPU elasticity~\cite{pollux}, there are several practical challenges in seamlessly supporting these features. For instance, with elastic training, the impact of changing batch sizes and  hyperparameters on training accuracy is unclear for a wide variety of tasks. We provide a detailed discussion on the practicality of each of these assumptions made by \sysname, and what it means to relax these assumptions for \sysname
in Section ~\ref{sec-disc}.}

	\section{\sysname : Design}
\label{sec-design}
\vheading{Overview}. \sysname is a round-based scheduler that arbitrates multi-dimensional resources (GPU, CPU, and memory) in a \textit{homogeneous} cluster. 
\sysname augments existing scheduling policies with \emph{resource sensitivity}  in two steps. First, it identifies the job's best-case CPU and memory requirements using \textit{optimistic profiling} (\sref{sec-profile}). \sysname then identifies a set of runnable jobs for the given round using a scheduling policy (\eg SRTF, FTF, LAS, etc) such that their collective GPU demand is less than or equal to the GPUs available in the cluster. Then, using the profiled resource demands, \sysname packs these jobs on to the available servers along multiple resource dimensions using a near-optimal heuristic algorithm (~\sref{sec-sched-mechanism}).  At the end of a round, the set of runnable jobs are updated using the scheduling policy, and their placement decisions are recomputed. We now discuss both the components of \sysname in detail. Note that \sysname only alters the auxiliary resource allocations; GPU demands are left unaltered for the lifetime of a job and are provided as inputs by the user.

\subsection{Optimistic Profiling}
\label{sec-profile}
A DNN job is profiled for its resource sensitivity once per lifetime of the job, i.e. on job arrival. 
Each incoming job is profiled by varying the CPU and memory allocated to the job. A \emph{resource sensitivity matrix} is then constructed for discrete combinations of CPU and memory allocations as shown in Figure~\ref{fig-profiling}. Since DNN training has a highly predictable structure, empirically evaluating training throughput for a few iterations gives a fair estimate of the actual job throughput~\cite{gandiva, coordl}.

It is easy to see that naively profiling different combinations of CPU and memory can be very expensive. For instance, if the cost of profiling one combination of CPU, and memory for a job is 1 minute, then to profile all discrete combinations of CPU and memory (assuming allocation in units of 50GB) on a server with 24 CPUs and 500GB DRAM takes about 24*10 = 240 minutes (4 hours)!

 \begin{figure}[!t]

  \centering 

  \includegraphics[width=0.4\textwidth]{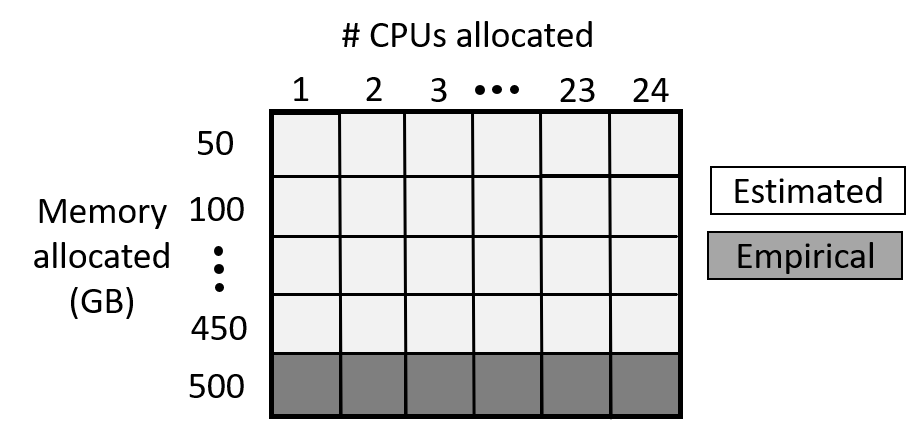}    
  \vspace{-1em}
    \caption{\textbf{Optimistic profiling} empirically evaluates the sensitivity of a model to
    varying \# CPUs assuming a fully cached dataset; the rest of the matrix is completed using estimation}
  \vspace{-3mm}
  \label{fig-profiling}

\end{figure}

To tackle this problem, \sysname introduces an optimistic profiling technique that exploits the predictability in the relationship between job throughput and memory allocation. We observe that, with DNN-specific, application-level caches like \minio~\cite{coordl}, it is easy to model the job throughput behaviour as we vary the amount of memory allocated to a job at fixed CPU allocation. This is because, \minio ensures that a job gets a fixed number of cache hits per epoch. \sysname makes a conscious decision to use application-level \minio cache instead of Page Cache because \minio provides memory isolation across independent jobs sharing the machine. If we do not use \minio, we will have to profile the model at discrete memory allocations which could result in increased profiling costs, and also potentially change the trends in profiling matrix.  However, the use of MinIO in \sysname makes cache performance predictable and hence reduces \sysname’s profiling costs -- allowing optimistic profiling.  

For a given CPU allocation that determines the pre-processing speed, and a known storage bandwidth, it is easy to analytically model the job throughput for varying memory allocation. Therefore, we only need to empirically profile the job for varying CPU values at full memory allocation as shown in Figure~\ref{fig-profiling}. 
All the other entries can be estimated using the above technique. This leads to a 10\myx reduction in profiling time, bringing it down to 24 minutes! We experimentally validate this in Figure~\ref{fig-validate-mem}. For a 8-GPU ResNet18 job, we compare the modeled job throughput using \sysname to the empirical results obtained by training the job for 2 epochs with varying memory allocations. As we see in  Figure~\ref{fig-validate-mem}, \sysname's estimations are within 3\% of the empirical results, without having to actually run the model.

To further optimize profiling time, we observe that we do not require exact throughput values for a job with varying CPU allocations. We instead need a curve depicting the empirical job throughput. Therefore, instead of profiling the job for all possible CPU values, we pick discrete points for CPU profiling using the following algorithm. We start with the maximum CPU allocation and do a binary search on the CPU values to estimate job throughput. If the profiled point resulted in a throughput improvement that is less than a fixed threshold (say 10\%), then we continue binary search on the lower half of CPU values, else we profile more points on the upper half. The idea here is to empirically profile CPU regions that show significant difference in job throughput, while skip those regions with little to no improvement in throughput. We experimentally show the efficacy of our CPU profiling technique in Fig~\ref{fig-validate-cpu} for a 1-GPU ResNet18 job. We compare the normalized job runtime (wrt 1 CPU) using empirical results averaged over 2 epochs of the job and \sysname's optimistic profiling averaged over  50 iterations (approximately, a minute per profile). \sysname is able to mimic the empirical job performance very closely, in under 8 minutes (using just 8 CPU profile points instead of 24). We believe that this is a reasonable overhead as it is incurred only once per lifetime of the job, which typically runs for hours.

\begin{figure}[!t]
	\centering

\subfloat[Memory Validation\label{fig-validate-mem}]{{\includegraphics[width=.23\textwidth]{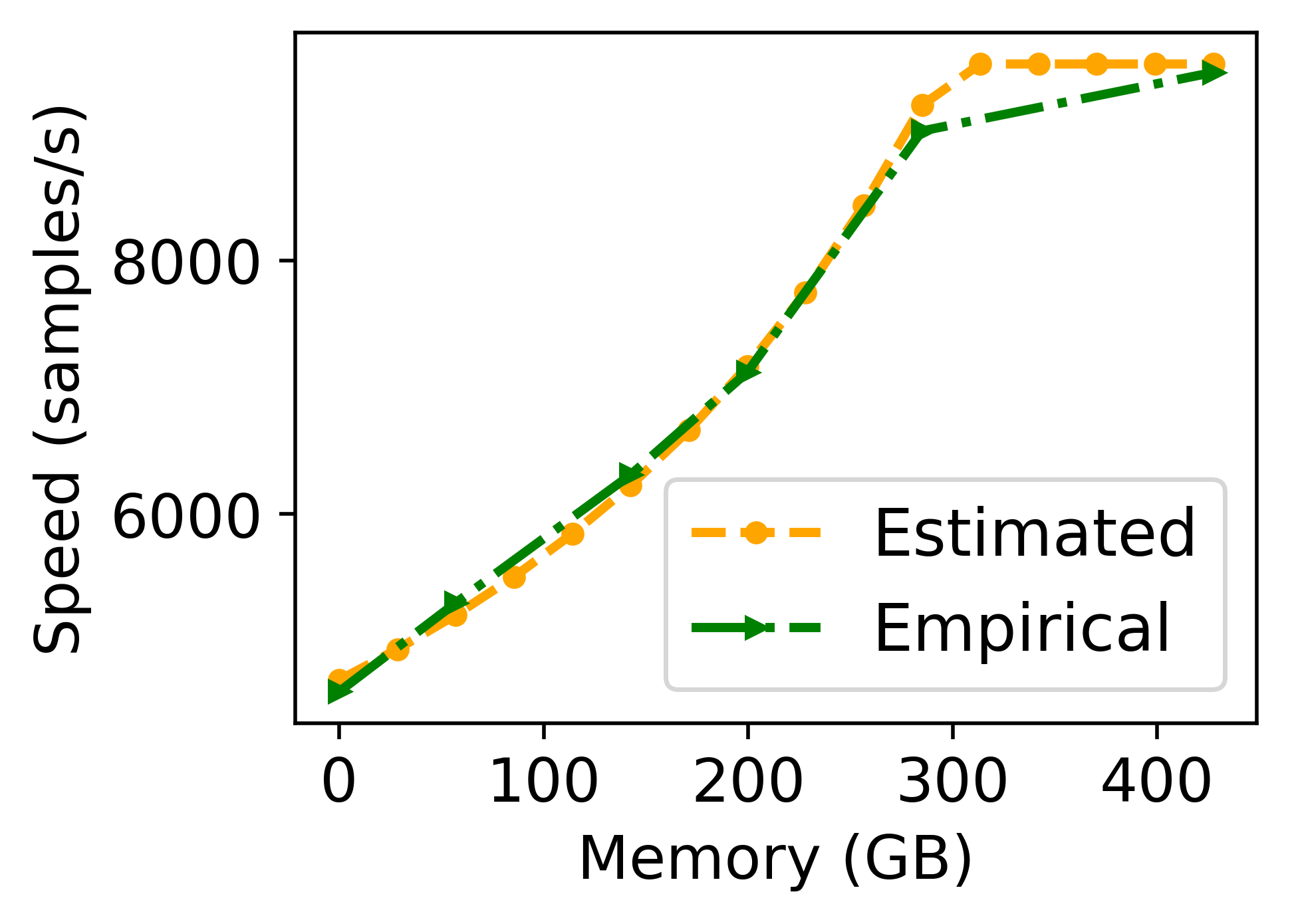} }}%
 \subfloat[CPU validation\label{fig-validate-cpu}]{{\includegraphics[width=.23\textwidth]{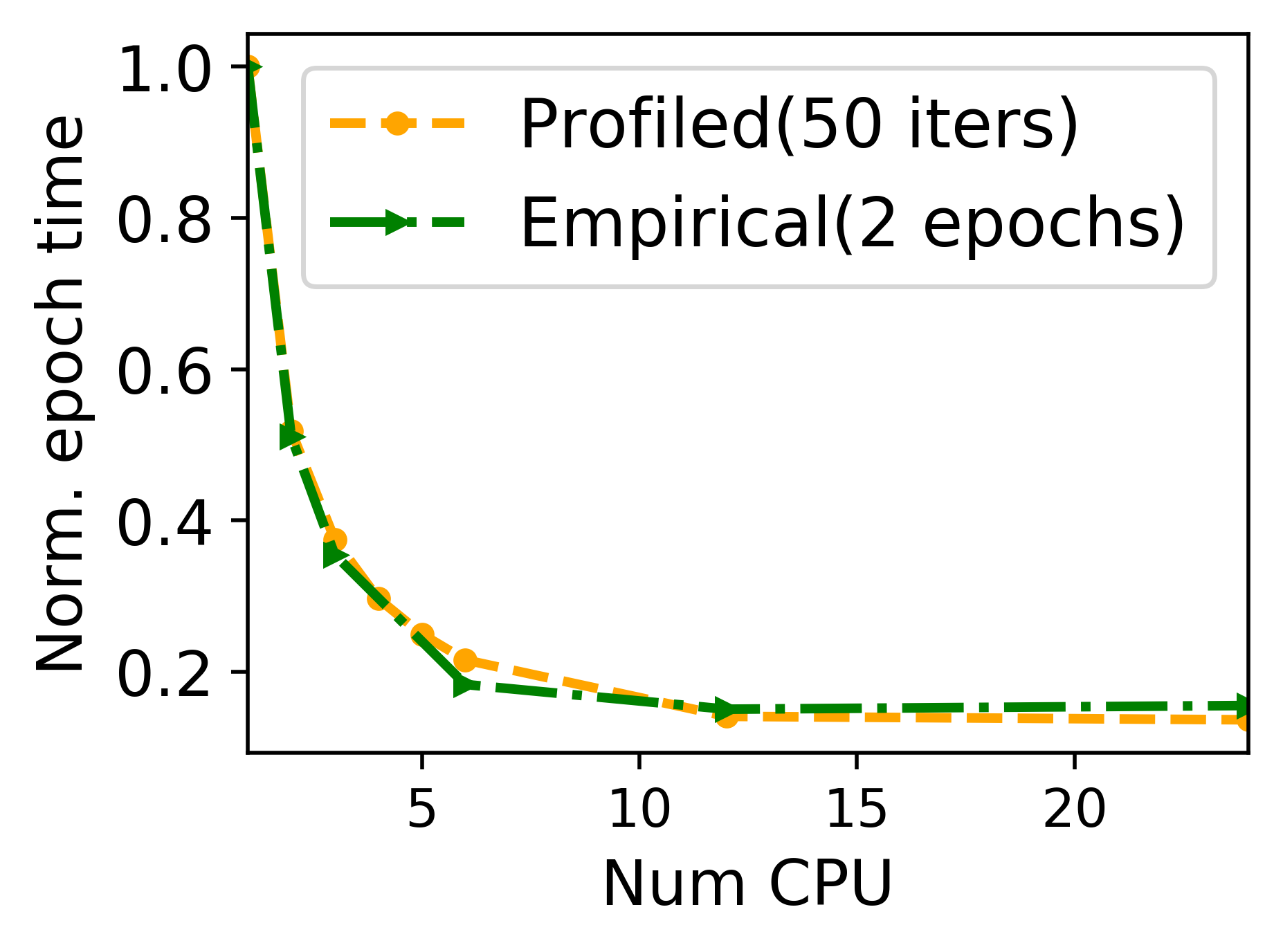} }}%
	
 \mycaption{Optimistic profiling}{The graphs compare the profiling results to empirical runs for ResNet18}
 \vspace{-3mm}
\label{fig-eval-prof-validate}
\end{figure}

After profiling a job on arrival, the job along with its resource sensitivity matrix is enqueued into the main scheduling queue, from which the scheduling policy picks a set of runnable jobs every round. 


\subsection{Scheduling mechanism}
\label{sec-schedule}
\sysname performs round-based scheduling. At the beginning of each scheduling round, 
\sysname identifies a set of runnable jobs from the scheduling queue that can be packed on the cluster in the current round duration using a scheduling policy such as FIFO, SRTF, LAS, or FTF. Using the resource sensitivity matrix, \sysname packs these jobs onto the available servers in the cluster while satisfying the multi-dimensional resource constraints as opposed to simply performing a \fair allocation . 

\vheading{Job demand vector}. To pack the jobs onto servers, we first construct a job demand vector that indicates the  GPU demand, and best-case CPU and memory requirements for the job. We identify the best-case values using the resource sensitivity matrix. We pick the minimum value of CPU and memory that saturates the job throughput. 

Packing a job with multi-dimensional resource demands is analogous to multi-dimensional bin packing problem which is NP hard~\cite{np_hard}. 
Therefore, we first evaluate the efficacy of a naive greedy scheduling mechanism as an approximation to tackle the multi-dimensional resource allocation problem.

\subsection{\sysgreedy : Greedy Scheduling}
A naive greedy multi-resource packing algorithm translates to a first-fit approximation of the multi-dimensional bin packing problem~\cite{first_fit}. Given a job demand vector, the greedy algorithm picks the next runnable job decided by the scheduling policy, and places it on the server that can satisfy the job's demands in all dimensions. If no such server exists, the job is \emph{skipped} over for this round and the next runnable job is checked for schedulability. \sysgreedy thus introduces two major problems in the cluster - 
\begincompactitemize
	\item It can result in auxiliary resources being exhausted by jobs, while leaving GPUs underutilized, and fragmented. 
	We show that GPU fragmentation in \sysgreedy severely degrades cluster objectives (\ref{sec-eval-split}). 
	
	\item It also hurts the fairness of the scheduling policy as some jobs can be skipped over for a long time if their resource demands cannot be satisfied in the cluster. 
\end{itemize}

 The challenge ahead of us is to design a scheduling mechanism that eliminates GPU under-utilization due to fragmentation, and upholds the fairness properties of the given scheduling policy, while performing multi-dimensional resource allocation. Before we come up with a  heuristic scheduling approach to tackle the above problems, one pertinent question is to understand how good is the allocation produced by our heuristic when compared to an optimal solution. 


To this end, we first formulate a theoretical upper bound on the optimal throughput achieved by the cluster given a set of jobs and their resource sensitivity profiles. We then discuss the challenges associated with materializing the optimal allocation on a physical cluster and introduce \systune, an empirically close-to-optimal heuristic solution.
	
\section{Scheduling Algorithms}
\label{sec-sched-mechanism}
We first present our formulation of an optimal allocation that provides an upper bound on the achievable cluster throughput. 

\subsection{\sysopt}
\label{sec-opt}
Our goal is to allocate CPU and memory to each job so as to maximize overall throughput, while guaranteeing that each job makes at least as much progress as it would do if we allocate its {\em \fair share}. It is not hard to show that our problem is NP-hard. So, we resort to finding approximate solutions using LP formulation. To find an \emph{upperbound} on achievable throughput, we solve two LPs. In the interest of space, we describe the first LP formulation here, and summarize the challenges in operationalizing \sysopt. A complete description of \sysopt formulation and proof can be found in the extended version~\cite{synergy-arxiv}. \revised{While the focus of this work is on homogeneous cluster, we show how our formulation can be extended to a heterogeneous GPU cluster in the extended version of the paper~\cite{synergy-arxiv}}.

\subsubsection{Finding ideal allocation}
First, we assume an {\em idealized setting}:  all the CPU and memory available across all the machines is present in one (super) machine. Say there are a total of $s$ homogeneous machines in the cluster. We assume that, there is only one machine with $G$ units of GPU, $C$ units of CPU, and $M$ units of memory. Note that, in reality $G_i$, $C_i$, and $M_i$ denote the total GPU, CPU, and memory in each machine $i$, which is $G/s$, $C/s$, and $M/s$ respectively in a homogeneous cluster. Based on this assumption, we find the ideal  CPU ($c^*_j$) and memory ($m^*_j$) allocation for every job $j$ (whose GPU demand is denoted by $g_j$) in the set of runnable jobs ($J_t$) for a round. 

The variables of our LP are denoted by $y_{\{c,m,j\}}$, which should be interpreted as follows.
If for a job $j \in J_t$,  $y_{\{c,m,j\}} = 1$, then it means that in the LP solution $c$ units of CPU and $m$ units of memory are allocated. We further note that for every job $j$, there is a variable $y_{\{c,m,j\}}$  for {\em for every possible} allocation of CPU and memory. We consider these variables in the discrete space as identified by our resource sensitivity matrix ($W_j$).  
$W_j[c, m]$ denotes the amount of progress made by job $j$ per round if $c$ units of CPU and $m$ units of (RAM) memory are allocated to job $j$. For each machine $i \in [s]$, we denote $C_{g}, M_{g}$ as the \fair allocation of CPU and memory. That is, $C_{g} = C_i/G_i * g_j$ and $M_{g} = M_i/G_i * g_j$.
With a baseline \fair allocation strategy the progress a job makes in each round is equal to $W[C{g}, M{g}]$.


Our objective function is to maximize the throughput. We formulate it as follows using our LP variables.
\begin{equation}
\text{Maximize}  \quad \sum_{j \in J_t}  \sum_{[c,m]} W_j[c, m] \cdot y_{\{c,m,j\}}
\end{equation}
Now, we enforce constraints such that LP solution is feasible in the idealized setting we talked about.
\begincompactitemize
\item Total CPU and memory allocated to jobs is no more than the total capacity available:

\begin{equation}
 \sum_{j \in J_t}  \sum_{[c,m]} c \cdot y_{\{c,m,j\}}  \leq C
\end{equation}

\begin{equation}
 \sum_{j \in J_t}  \sum_{[c,m]} m \cdot y_{\{c,m,j\}}  \leq M
\end{equation}

\item We want the LP to allocate only one configuration of CPU and memory to each job.
\vspace{-0.8em}
\begin{equation}
\forall j \in J_t: \quad  \sum_{[c,m]} y_{\{c,m,j\}}  = 1
\end{equation}
\vspace{-1em}
\item LP solution is atleast as good as the fair allocation.
\vspace{-0.5em}
\begin{equation}
\forall j \in J_t: \quad  \sum_{[c,m]} W_j[c,m] \cdot y_{\{c,m,j\}}  \geq W_j[C_{g}, M_{g}] 
\end{equation}
\vspace{-0.5em}
\end{itemize}

\vspace{-1em}
\begin{theorem}
Throughput achieved by LP(1-5) is at least the throughput achieved by an optimal solution to our problem.\end{theorem}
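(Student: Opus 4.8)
The plan is to recognize Theorem~1 as a standard relaxation statement: I would show that the feasible region of LP(1--5) contains the encoding of \emph{every} feasible allocation for the original problem, so that maximizing the same objective over this larger region can only raise the optimum. Concretely, I would view LP(1--5) as a relaxation of the true problem along two axes: (i) it replaces the integral choice of a single $(c,m)$ configuration per job by fractional variables $y_{\{c,m,j\}} \in [0,1]$, and (ii) it pools all per-machine capacities into one super-machine, thereby dropping the requirement that a given job draw its CPU and memory from a single physical machine. Both moves only enlarge the feasible set, so it suffices to exhibit a feasibility-preserving, objective-preserving map from solutions of our problem into the LP.

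First I would fix an optimal solution to our problem. By definition it assigns to each job $j \in J_t$ a machine together with a CPU allocation $c_j$ and a memory allocation $m_j$, subject to the per-machine capacity bounds and the fairness guarantee $W_j[c_j, m_j] \ge W_j[C_{g}, M_{g}]$; its throughput is $\sum_{j \in J_t} W_j[c_j, m_j]$. I would then construct a candidate LP point by setting $y_{\{c_j, m_j, j\}} = 1$ and all remaining variables to $0$, and verify each constraint in turn. Constraint (4) holds because exactly one configuration is selected per job. Constraint (5) holds because it is precisely the fairness guarantee already satisfied by the chosen allocation, and since it is stated per job independently of machine placement it transfers verbatim. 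The objective value of this point equals $\sum_{j \in J_t} W_j[c_j, m_j]$, the throughput of the optimal solution.

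The substantive step is constraints (2) and (3). Here I would sum, machine by machine, the CPU (resp.\ memory) consumed by the jobs placed on that machine, invoke per-machine feasibility of the original solution to bound each such partial sum by $C_i = C/s$ (resp.\ $M_i = M/s$), and add over the $s$ machines to obtain a grand total of at most $C$ (resp.\ $M$). This is exactly where the pooling relaxation is used: distributed feasibility implies aggregate feasibility. The conclusion then follows immediately, since the constructed point is LP-feasible with objective equal to the optimal throughput, and LP(1--5) is a maximization, so its optimum is at least that value.

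I expect the only place requiring genuine care to be this aggregation argument for (2) and (3): one must confirm that summing single-machine usage against single-machine capacity correctly reproduces the pooled bounds $C$ and $M$, and that each job is counted on exactly one machine (so there is no double counting across the sum). Once that bookkeeping is pinned down, the inequality is automatic and no deeper combinatorial difficulty arises; the theorem is a clean relaxation bound rather than a hard structural result, which is consistent with the paper's later remark that the resulting optimum is merely an aspirational upper bound that need not be realizable on a physical cluster.
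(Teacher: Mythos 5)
Your proposal is correct and follows essentially the same route as the paper's own proof: take an optimal allocation, encode it as the indicator solution $y_{\{c_j,m_j,j\}}=1$, and observe that it is LP-feasible with the same objective value, so the LP maximum can only be larger. The paper compresses the feasibility check (including the machine-by-machine aggregation for the capacity constraints) into a single ``Clearly,'' whereas you spell that bookkeeping out explicitly, but there is no difference in substance.
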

\vspace*{-1em}
\begin{proof}
Consider an optimal solution $O$ to our problem. Suppose job $j$ receives $c^*$ units of CPU and $m^*$ units of memory in $O$. Then we define the following feasible solution to our LP (1-5): Set $y_{c^*, m^*, j} = 1$.
Clearly, this is a valid solution and satisfies constraints (1-4).
\end{proof}

In our experiments, we solve this as a Integer Linear Program (ILP) where $y_{\{c,m,j\}}$ takes boolean values. For every job, we define the total CPU ($c^*_j$) and memory ($m^*_j$) allocated by the optimal ILP solution as follows.
\vspace{-0.7em}
	\begin{equation}
	\text{For each job j, define}   \quad c^*_j := c  \quad \text{if}  \quad y_{\{c,m,j\} == 1}. 
	\end{equation}
\vspace{-1em}	
		\begin{equation}
	and   \quad m^*_j := m  \quad \text{if}  \quad y_{\{c,m,j\} == 1}. 
	\end{equation}
\vspace{-2em}

\subsubsection{Feasible Allocation on Multiple Machines}
Recall that in the LP(1-5), we assumed that all the resources are present on a single machine. In reality, since these resources are spread across machines, we find a feasible allocation on multiple machines by solving a second LP. The objective here is to minimize the number of jobs that get fragmented to account for the communication overhead when jobs are split across machines. The variables of the second LP are denoted by $x_{i,j}$. Here index $i$ denotes the machine and $j$ denotes the job.
If $x_{i,j} = 1$, it means that resources of job $j$ (that $g_j$ units of GPU, $c^*_j$ units of CPU, and $m^*_j$ units of memory) are allocated on machine $i$. Note that $x_{i,j}$ can be fractional; if so, then job $j$ is split across multiple machines. We can prove that the solution to the second LP ensures that the total number of jobs that get fragmented is at most $3s$. Detailed formulation is in the extended version of the paper~\cite{synergy-arxiv}.  

\subsubsection{Challenges with operationalizing \sysopt}
\label{sec-opt-challenge}
While the allocations identified by \sysopt provides an upper bound on the optimal cluster throughput, it is challenging to materialize these allocations in the real world due to two main reasons;
\begincompactitemize
	\item Solving two LPs per scheduling round is computationally expensive. As cluster size and the number of jobs per round increases, the time to find an optimal allocation increases exponentially (\sref{sec-eval-opt})
	\item The allocation matrix obtained with the second LP can result in fractional GPU allocations when jobs are split across servers; for instance, a valid allocation might assign 3.3 GPUs on server 1 and 2.7 GPUs on server 2 for a 6 GPU job. Realizing such an allocation requires a heuristic rounding off strategy to ensure non-fractional GPU allocations, as GPU time or space sharing, and its impact on job performance is considered beyond the scope of this work.
\end{itemize} 

\subsection{\systune}

We now describe \systune, our heuristic scheduling mechanism. Our goal is to design a scheduling mechanism that performs multi-dimensional resource allocation for DNN jobs, where the GPU demand is fixed, but the auxiliary resource allocations are fungible. In doing so, we want to ensure that (1)  we do not affect the fairness properties of the scheduling policy used,  (2) the expensive GPU resources are not underutilized.

\vheading{Allocation Requirements}. \systune's allocation must satisfy the following requirements.
\begin{itemize}[leftmargin=*,noitemsep,partopsep=0pt,topsep=0.2em,parsep=0pt]
	\item The GPU, CPU, and memory resources requested by a single-GPU job must all be allocated on the same server. 
	\item A multi-GPU distributed-training job can either be consolidated on one machine, or split across multiple machines. In the latter case, \emph{the CPU and memory allocations must be proportional to GPU allocations across servers}. For instance, if a job requires (2GPU, 12 CPU, 300GB DRAM), then while splitting it across two servers, we need to ensure that each server gets (1GPU, 6CPU, 150GB DRAM). This is because, multi-GPU jobs train on a separate process on each GPU, and synchronize at regular intervals, i.e., after one or many iterations. The job performance will vary across processes if each GPU does not get the same ratio of resources, and  will eventually proceed at the speed of the process with the lowest allocation of CPU and memory.
\end{itemize}

 In a multi-tenant cluster, while carving out resources such as CPUs and memory for jobs, it is import to enforce fairness in terms of throughput achieved by individual jobs. We need to ensure that no job runs at a throughput lower than what it would have achieved if we allocated a \fair share of CPU and memory resources. Additionally, we need to respect the priority order of jobs identified by the scheduling policy. For instance, a FIFO scheduling policy can be implemented using a priority queue sorted by job arrival times. 
\systune identifies a set of runnable jobs for a round as the top \textit{n} jobs from the scheduling queue, whose GPU demands can be exactly satisfied by the available servers in the cluster. 
 \systune picks this runnable job set irrespective of the job's other resource demands - which are fungible. Note that, unlike \sysgreedy, we do not skip over any jobs unless it cannot be scheduled (GPU demand cannot be met). Therefore, we never underutilize the GPUs when the cluster is at full load.

Next, \systune greedily packs each of these runnable jobs along multiple resource dimensions on one of the available servers, with the objective of minimizing fragmentation. To achieve this, \systune sorts the runnable jobs by their GPU demands, followed by CPU, and memory demand.  For each job $j$ in order, \systune then picks the server with the least amount of free resources just enough to fit the demand vector of $j$. If it is a multi-GPU job, then we find a minimum set of servers with sufficient GPU availability that can fit the job's demands in entirety. However, it is possible that the job cannot fit in the cluster along all dimensions. In such a case, 
	\begin{enumerate}[wide, labelwidth=!, labelindent=1pt, itemsep=0.5pt, topsep=0pt]
		\item We check if the job's demand vector is greater than proportional share of resources, In this case, we switch the job's demand to \fair share and retry.
		\item If the job still does not fit the cluster, or if the job's demand vector was less than or equal to GPU proportional allocation in step (1), then, we do the following.
		\begin{enumerate}[labelwidth=!, labelindent=1pt, itemsep=1pt, topsep=1pt]
			\item We repeat step (1) ignoring the job's CPU and memory requirements. We find a server that can just satisfy the job's GPU requirements. We know by construction that there is atleast one job on this server, which is allocated more than \fair share of resources. We identify the job or a set of jobs ($J_s$) on this server by switching whom to \fair share, we can release just as much resources required by the current job $j$. We switch the jobs in $J_s$ to fair-share and by design, job $j$ will fit this server.
			\item We continue this recursively for all runnable jobs.
		\end{enumerate}
	\end{enumerate}

In the worst case, all the running jobs in a round could be allocated \fair share of resources.  Therefore, \sysname ensures that its allocations results in job throughputs that are never worse than \fair allocation. In \sref{sec-eval-opt}, we empirically compare \systune to \sysopt showing that it is practical and near-optimal.

\subsection{Implementation}
\label{sec-impl}

We implement \sysname and an associated simulator in Python. 
Our scheduler is event-driven. There is a global event queue where job arrivals, schedule events, and deploy events are queued. These events are handled in the order of their arrival time. There is a priority job queue, where all the jobs arriving into the cluster are added, post profiling. This queue is sorted by the priorty metric decided by the scheduling policy; for instance, SRTF sorts the jobs in the order of job remaining time.

 When a schedule event occurs, the scheduler collects a list of runnable jobs from the job queue and identifies the appropriate placement for these jobs for the following round, either using \sysgreedy, \systune or \sysopt. Then when a deploy event is triggered, these allocations are deployed on to the cluster. By default, every job requests for a lease update to continue running on the same server~\cite{gavel}. The scheduler then either grants a lease update or terminates the lease for the job, adding it back to the job queue.
 
 The scheduler and the DNN jobs interact via a thin API provided by the \sysname data iterator. DNN job scripts must be updated to call the \sysname iterator which is a wrapper around the default PyTorch~\cite{pytorch} and DALI~\cite{dali} iterators. The iterator handles registering the job with the scheduler, and appropriately sending lease updates. It also checkpoints the job to a shared storage if its lease is terminated. The iterator also synchronizes across GPU processes for a multi-GPU job to ensure that each process makes identical progress. We use \texttt{gRPC}~\cite{grpc} to communicate between the scheduler and the jobs. 
 
 We implement \sysopt in \texttt{cvxpy}~\cite{cvxpy} for use in our simulator. The optimistic profiling module is also implemented in Python, and it profiles the incoming jobs hooked to the \sysname iterator, prior to the job's initial addition to the scheduling queue (a one time overhead for each job).
	
\section{Evaluation}
\label{sec-eval}
In this section, we use trace-driven simulations from production cluster traces, and physical cluster deployment to evaluate the efficacy  of \sysname. Our evaluation seeks to answer the following questions.

\begin{table}[!t]
	\small
	\centering
	\ra{0.9}
	
	\begin{tabular}{l@{\hskip10pt}r@{\hskip15pt}l@{\hskip5pt}r@{\hskip25pt}r@{\hskip10pt}}
		\toprule[1.2pt]
		Task & Model & Dataset\\ 
		\midrule
		\multirow{5}{*}{\shortstack[l]{Image}}  & Shufflenetv2~\cite{zhang2018shufflenet} & 	\multirow{5}{*}{\shortstack[l]{ImageNet~\cite{russakovsky2015imagenet}}} \\
		& AlexNet~\cite{krizhevsky2012imagenet} &  \\
		& Resnet18~\cite{he2016deep} &  \\
		& MobileNetv2~\cite{sandler2018mobilenetv2} & \\ 
		& ResNet50~\cite{he2016deep} & \\
		\midrule
		\multirow{3}{*}{\shortstack[l]{Language}} &  GNMT~\cite{wu2016google} & WMT16~\cite{wmt16} \\
				 &  LSTM~\cite{lstm} & Wikitext-2~\cite{wikitext_2} \\
				 &  Transformer-XL~\cite{transformer-xl} & Wikitext-103~\cite{wikitext_2} \\
		\midrule
		\multirow{2}{*}{\shortstack[l]{Speech}} & M5~\cite{dai2017very} & Free Music~\cite{defferrard2016fma} \\
				& DeepSpeech~\cite{deepspeech} & LibriSpeech~\cite{librispeech} \\
		\bottomrule[1.2pt]
	\end{tabular}
    \vspace{-1em}
	\mycaption{Models used in this work}{}
	\label{tbl-dataset-models}
	\vspace{-2em}
\end{table}

\begincompactitemize
	\item Does \sysname's resource-sensitive scheduling improve cluster objectives such as makespan and average JCT in a physical cluster (\sref{sec-eval-deploy}) and in trace-driven simulations of large-scale clusters (\sref{sec-eval-sim-all}) ?
	\item How does \systune and \sysgreedy perform with different workload splits and how well do they utilize available resources  (\sref{sec-eval-split})?
	\item How does \sysname perform on different CPU:GPU ratios (\sref{sec-eval-cpu-cores})?
	\item Compare \systune to \sysopt (\sref{sec-eval-opt})?
	\item Compare \sysname to big data schedulers (\sref{sec-eval-big-data})?
	
\end{itemize}

\subsection{Experimental setup}
\label{sec-eval-setup}
\vheading{Clusters}. Our experiments run on both a physical and a large simulated, homogeneous cluster.  Our experiments are performed on state-of-the-art internal servers at Microsoft - these servers are part of a larger multi-tenant cluster. We run physical cluster experiments on a cluster with 32 V100 GPUs across 4 servers. Each server has 500GB DRAM, 24 CPU cores, and 8 GPUs.
Unless otherwise specified, our experiments assume a CPU:GPU ratio of 3  and fair-share memory allocation of 62.5GB per GPU, matching the server configurations above.  For simulations, we assume two cluster sizes; a 128 GPU cluster across 16 servers and a 512 GPU cluster across 64 machines, where each machine resembles the physical server configuration mentioned above.

\vheading{Models}. Our experiments consider 10 different DNNs (CNNs, RNNs, and LSTMs) as shown in Table~\ref{tbl-dataset-models}. We categorize these models by task (image, language, and speech) and assign a certain weight to these tasks in our traces. We call this a workload \textit{split}. For instance, if the split for a given trace is (30,40,30), then the percentage of image, language, and speech models in the job trace is 30\%, 40\% and 30\% respectively. All experiments are performed on PyTorch 1.1.0. 

\setlength{\tabcolsep}{4pt}
\begin{table}[!t]
  \small
  \centering
  \ra{0.9}
 \begin{tabular}{ccccccc}
	\toprule[1pt]
	      \textbf{Policy}  &   \textbf{Workload} & \multirow{2}*{\textbf{Mechanism}}  & \multicolumn{2}{c}{\textbf{Time (hrs)}}\\ 
	      \textbf{(Metric)} & \textbf{Split} & & \textbf{Deploy} & \textbf{Simulate}  \\
	      \midrule
         \multirow{2}*{\shortstack{FIFO \\(Makespan)}} & \multirow{2}*{60-30-10}    & Proportional   & 16 & 15.67 \\
         &	&                         Tune & 11.6 & 11.33 \\
         &	&                         Opt & - & 11.01 \\
         \midrule
         \multirow{2}*{\shortstack{SRTF \\(Avg JCT)}} & \multirow{2}*{30-60-10} &    Proportional & 4.81  & 4.52\\
         &	& Tune & 3.21 &  3.19\\
          &	& Opt & - & 3.06 \\
          \midrule
          \multirow{2}*{\shortstack{SRTF \\(99 Percentile\\ JCT)}} & \multirow{2}*{30-60-10} &    Proportional & 17.32  & 16.85\\
          &	& Tune & 8.59 &  8.54\\
          &	& Opt & - & 8.21 \\
	\bottomrule[1pt]
   \end{tabular}
   \vspace{-1em}
       \mycaption{Physical cluster experiments}{This table compares the makespan, average JCT, and 99th percentile JCT for two different traces; (1) a static trace using FIFO  (2) a dynamic trace using SRTF. \systune improves makespan by 1.4\myx, average JCT by 1.5 \myx and 99th percentile JCT by 2\myx.}
   \vspace{-1em}
\label{tbl-eval-dep}
\end{table}

\vheading{Traces}. We run our physical and simulated experiments using publicly available production traces from Microsoft Philly cluster~\cite{philly}.
We show evaluation with the actual Philly trace preserving the job GPU demand, arrival time, and duration, on a cluster of 512 GPUs in ~\sref{sec-eval-philly}. We use a subrange of the trace containing 8000 jobs. 

However, to comprehensively evaluate how \sysname reacts to varying cluster load, workload composition, and job duration, for all other experiments, we construct a production-derived trace as follows: we extract job GPU demand from the Philly trace and assign a model based on the chosen \jsplit. We then appropriately scale the job runtime and arrival time for the chosen cluster size, while keeping the job duration distribution similar to the one in Philly trace as follows:
\begincompactitemize
    \item \textbf{Duration}. The duration of each job for the baseline \fair allocation is sampled from an exponential distribution: the job duration is set to $10^x$ minutes, where x is drawn uniformly from [1.5,3] with 80\% probability, and from [3,4] with 20\% probability similar to the trace duration used in prior work~\cite{gavel}.  
    \item \textbf{Arrival}. We classify derived traces into two kinds based on the job arrival time : (1)  a \textit{static} trace where all the jobs arrive at the start of the workload, and (2) a \textit{dynamic} trace, where the job arrival time is determined by load, a Poisson distribution at a rate $\lambda$. 
\end{itemize}
 


The derived traces with varying job arrival rates uses a 128 GPU cluster. In both cases, we report the average metrics such as JCT across a set of 1000 jobs in steady state.

\begin{figure*}[!t]
\centering
\subfloat[Average JCT with \sysname \label{tbl-philly-eval}]{
  \ra{1}
 \begin{tabular}[b]{cccc}
	\toprule
	& \multicolumn{3}{c}{\textbf{Avg JCT(hrs)}}\\
		\textbf{Policy} & \textbf{SRTF} & \textbf{LAS} & \textbf{FIFO} \\
	\midrule
	\textbf{GPU-prop.} & 30 & 32 & 71 \\
	\textbf{\sysname} & 26 & 28 & 62 \\
	
	\bottomrule
   \end{tabular}
}
\qquad
\subfloat[Cluster metrics (SRTF)\label{fig-sim-srtf-cdf-tbl}]{
\ra{0.7}
    \begin{tabular}[b]{cccc}
    \toprule
      \multicolumn{2}{c}{\textbf{JCT (hrs)}} & \textbf{Short} & \textbf{Long} \\
      \midrule
      \multirow{2}{*}{\textbf{Avg}} & Prop. & 2 & 80 \\
      & Synergy & 1.7 & 68\\
    \midrule
      \multirow{2}{*}{\textbf{99p}} & Prop. & 9 & 660\\
      & Synergy & 4 & 641\\
      \bottomrule
    \end{tabular}
}
\qquad
\subfloat[JCT speedup across jobs \label{fig-sim-srtf-impr}]{{\includegraphics[width=.25\textwidth]{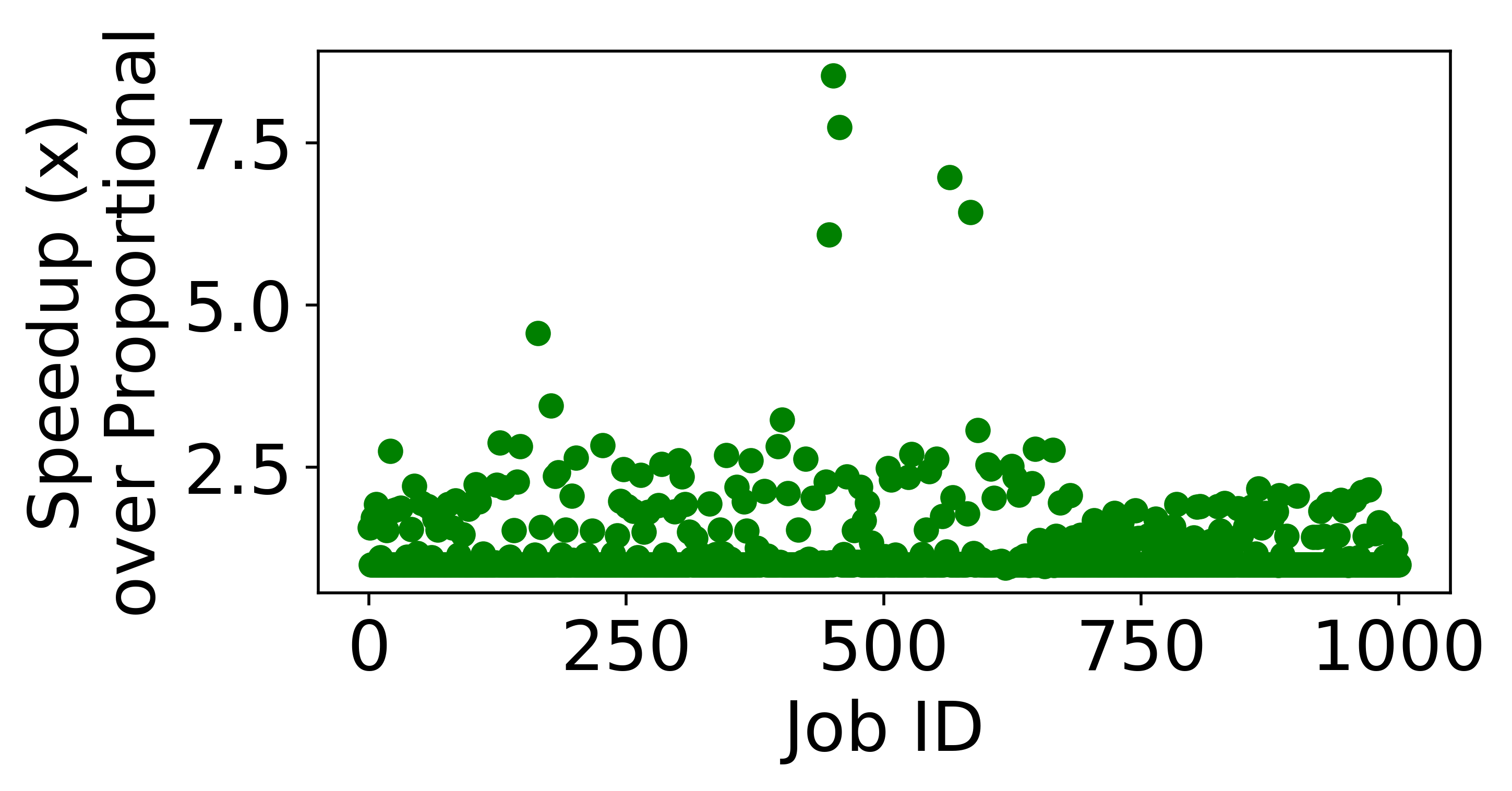} }}%
\vspace*{-2em}
\mycaption{Evaluation on Philly Trace}{On a real production trace, \sysname improves avg JCT across a range of scheduling policies over \fair scheduling. The JCT of individual jobs improves by upto 9\myx with \sysname.}
\vspace*{-1em}
\label{fig-eval-philly}
\end{figure*}
\begin{figure*}[!htb]
		\centering
	\subfloat[LAS (multi) \label{fig-sim-las-jct}]{{\includegraphics[width=.3\textwidth]{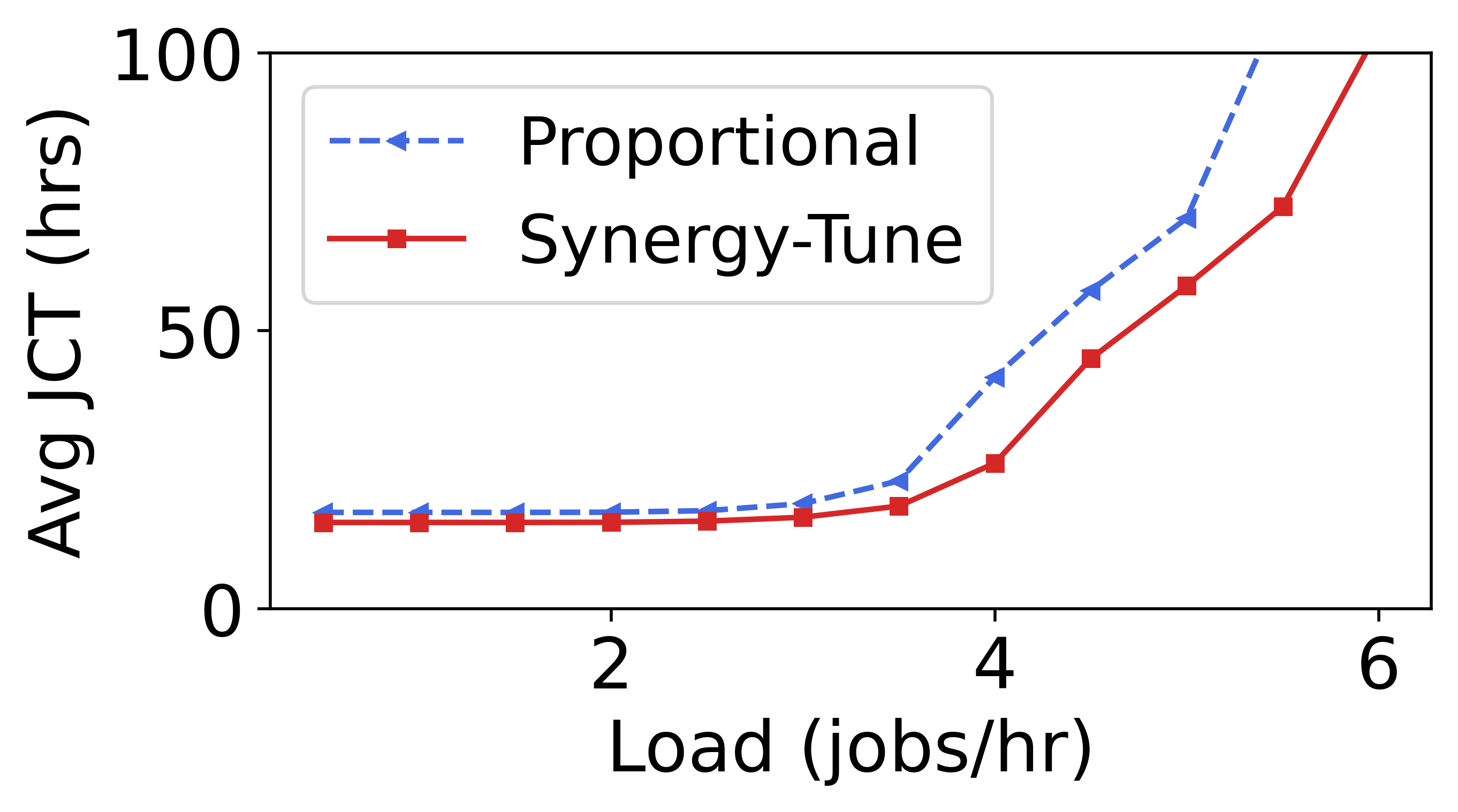} }}%
	\qquad
	\subfloat[CDF of JCT at load 4 (short) \label{fig-sim-las-cdf-short}]{{\includegraphics[width=.3\textwidth]{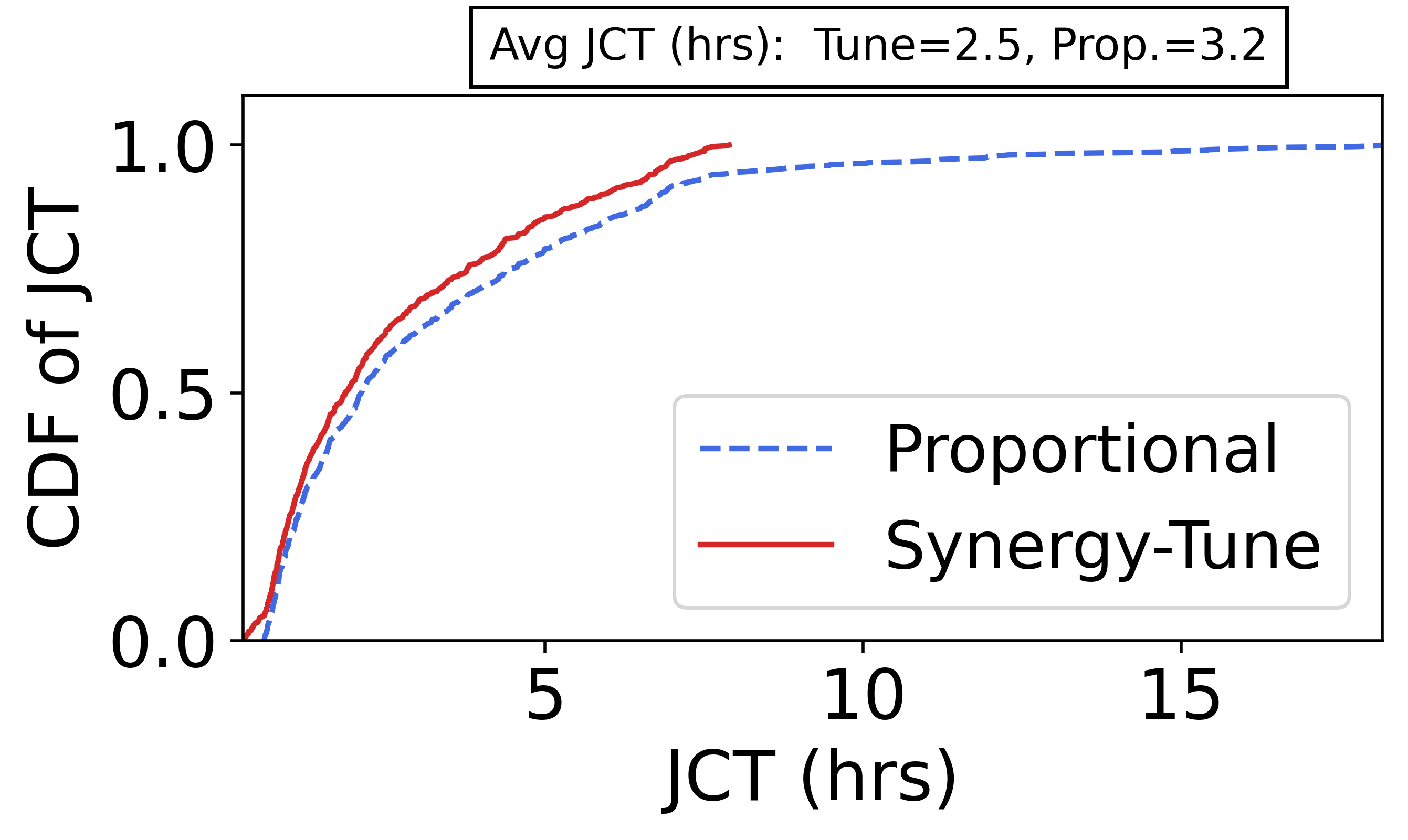} }}%
	\qquad
	\subfloat[CDF of JCT at load 4 (long) \label{fig-sim-las-cdf-long}]{{\includegraphics[width=.3\textwidth]{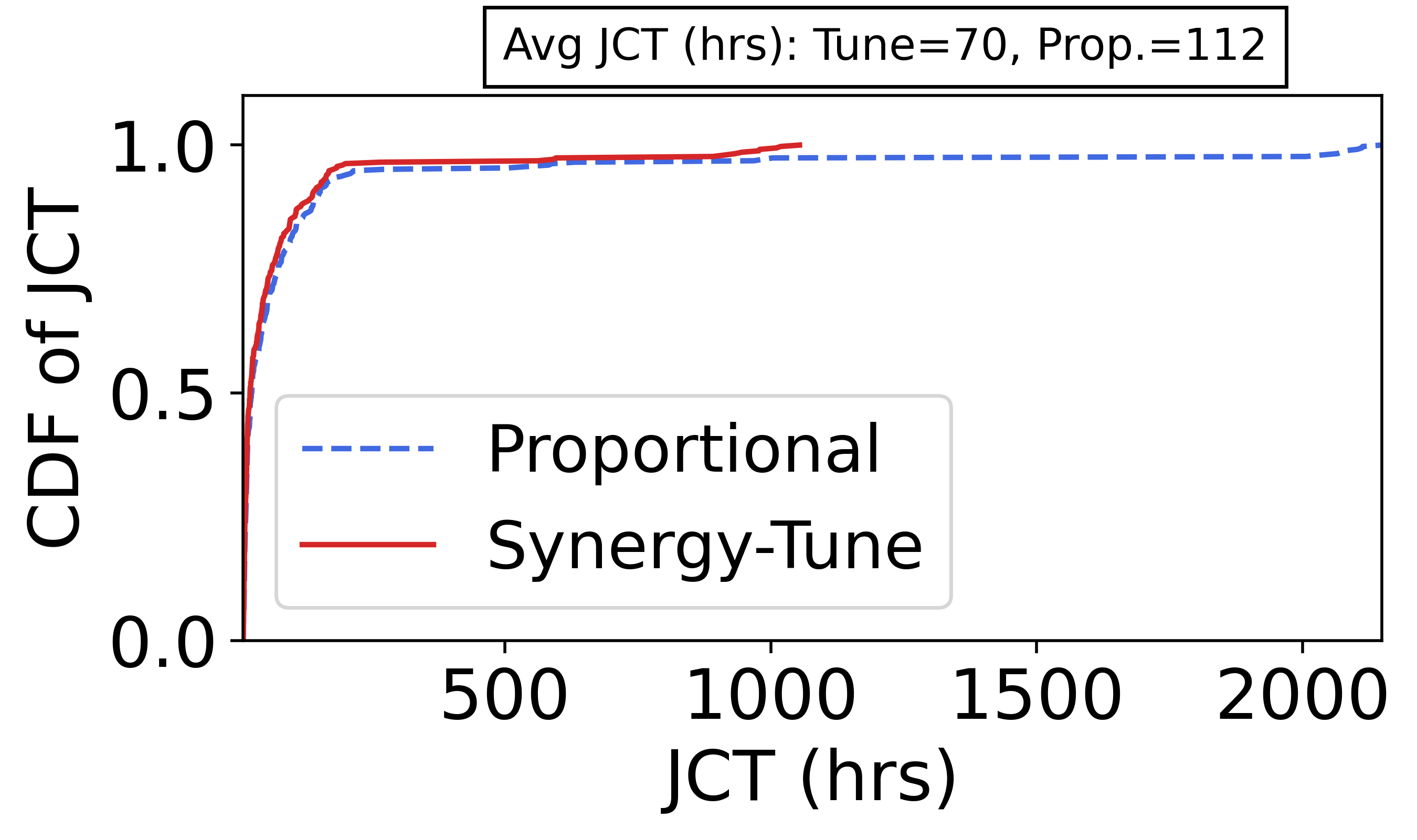} }}%

\vspace{-1em}
\mycaption{Average JCT and CDF of long and short jobs for LAS policy}{}
\vspace*{-1em}
\label{fig-eval-sim-las}
\end{figure*}
\begin{figure*}[!htb]

\subfloat[\revised{SRTF (multi)} \label{fig-sim-srtf-jct}]{{\includegraphics[width=.3\textwidth]{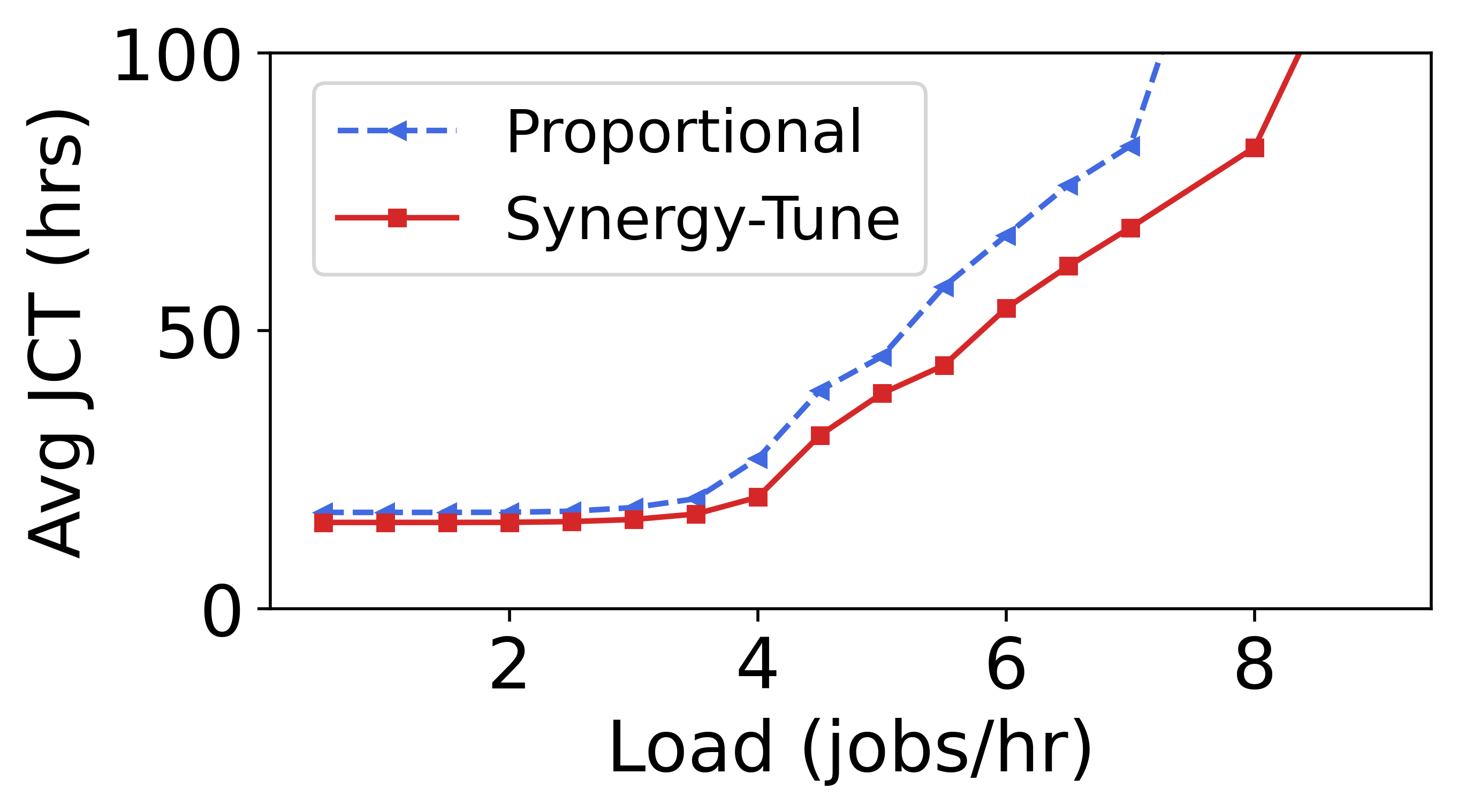} }}%
\qquad
\subfloat[\revised{CDF of JCT at load 5.5 (short)} \label{fig-sim-srtf-cdf-short}]{{\includegraphics[width=.31\textwidth]{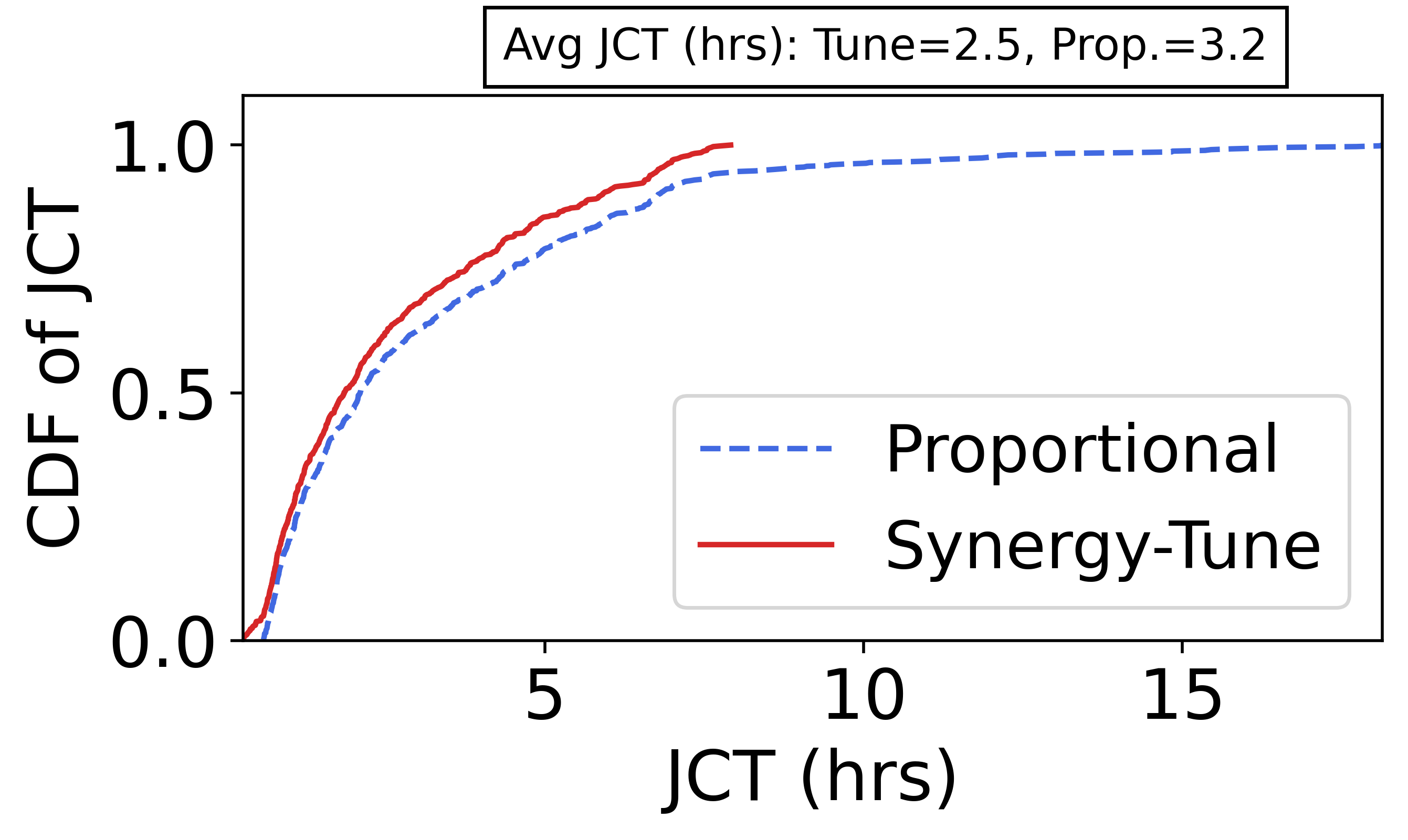} }}%
\qquad
\subfloat[\revised{CDF of JCT at load 5.5 (long)} \label{fig-sim-srtf-cdf-long}]{{\includegraphics[width=.3\textwidth]{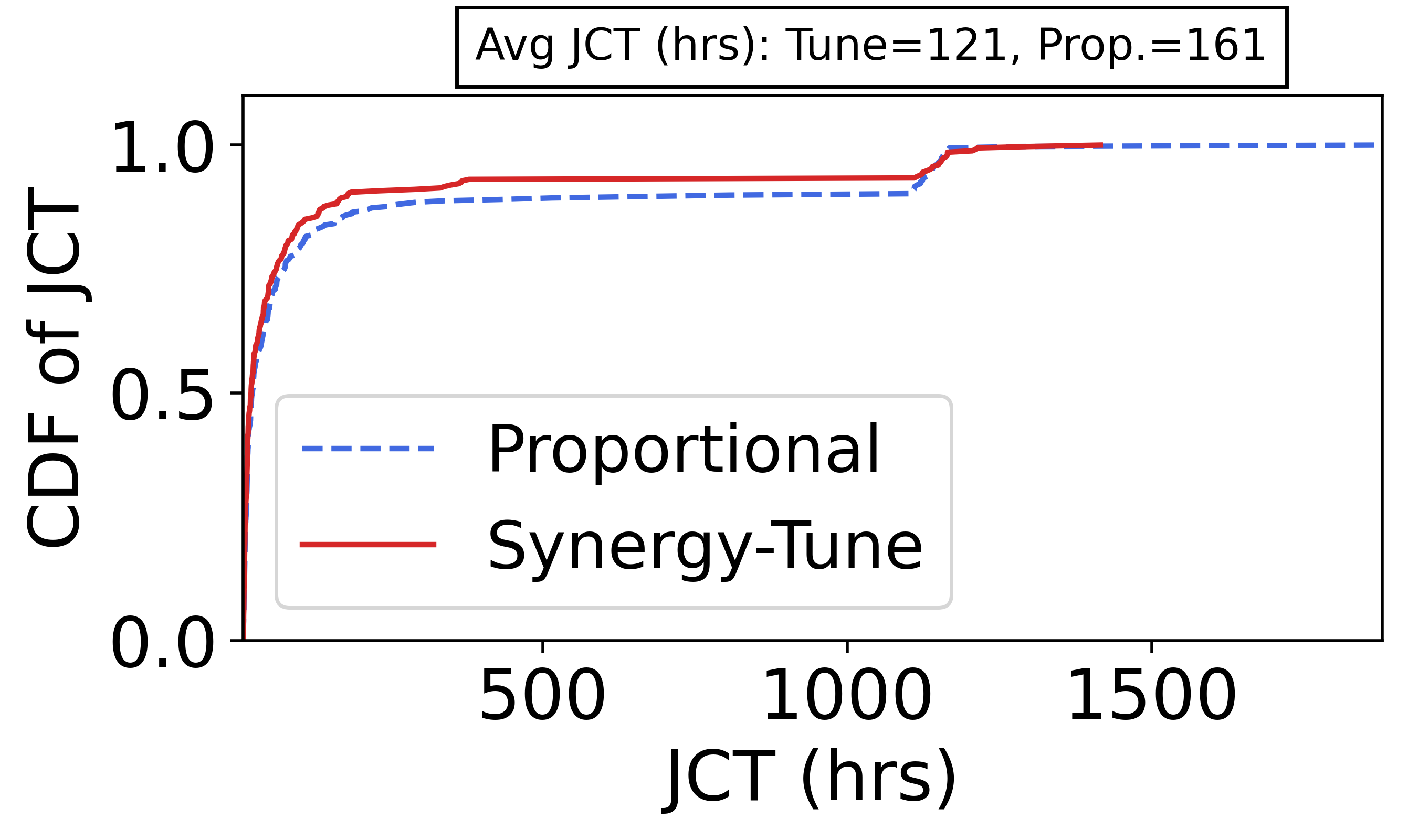} }}%
\vspace*{-2em}
\mycaption{\revised{Average JCT and CDF of long and short jobs for SRTF policy}}{}
\vspace*{-1em}
\label{fig-eval-sim-srtf}
\end{figure*}

For the physical cluster experiment, we choose a fixed arrival rate for the derived trace that keeps our cluster at \textit{full load} (GPU demand of all runnable jobs > available GPUs in the cluster). For the simulated experiments, we vary the load $\lambda$ on the cluster to evaluate its impact on cluster metrics. For the simulated experiments, we show results for two trace categories - (1) all jobs request single-GPU (2) multi-GPU distributed training jobs that request upto 16 GPUs.

\vheading{Policies and metrics}. We evaluate \sysname against \fair scheduling for 4 different scheduling polices; FIFO, SRTF, LAS, and FTF. For a static trace, we measure makespan (time to complete all jobs submitted at the beginning of the trace) and for the dynamic job traces, we measure the average job completion time (JCT) of a subset of jobs in steady state (cluster at full load), and their CDF. 

\subsection{End-to-End Physical Cluster Experiments}
\label{sec-eval-deploy}
For the physical cluster experiments, we run a \systune (\textit{tune}) and \fair allocation (\textit{proportional}) for two different workload traces. 
(1) A static production-derived trace of 100 jobs with a \jsplit (60,30,10), scheduled using FIFO and evaluated for makespan. (2) A dynamic production-derived trace with continuous job arrivals and a split of (30,60,10), scheduled using SRTF and evaluated for average and 99th percentile JCT. Both scenarios use an appropriately sized trace that keeps the cluster fully loaded. We compare the obtained results to that of the simulator by replaying the same trace. Additionally, we compare our metrics to the upper bound generated by the optimal solution, \sysopt (\textit{opt}). The results are shown in Table~\ref{tbl-eval-dep}.

\systune reduces the makespan of static trace by 1.4\myx when compared to \fair allocation. For the dynamic trace, \systune reduces average JCT of steady-state jobs by 1.5\myx while reducing the 99th percentile JCT of these jobs by 2\myx as shown in Table~\ref{tbl-eval-dep}.

We compare the observed results from physical experiments to the same trace replayed on our simulator. As shown in Table~\ref{tbl-eval-dep}, 
the difference between metrics in real and simulated clusters are less than 5\%, demonstrating the fidelity of the simulator.
We also see from Table~\ref{tbl-eval-dep} that the cluster objectives achieved by \systune are within 4\% of the optimal solution in this case. We do not deploy the optimal allocations due to the challenges enumerated in \sref{sec-opt-challenge}

\subsection{End-to-end results in simulation}
\label{sec-eval-sim-all}

\begin{figure}[!t]
	\centering
	\vspace{-1em}
	\subfloat[FIFO (single) \label{fig-sim-fifo-jct}]{{\includegraphics[width=.22\textwidth]{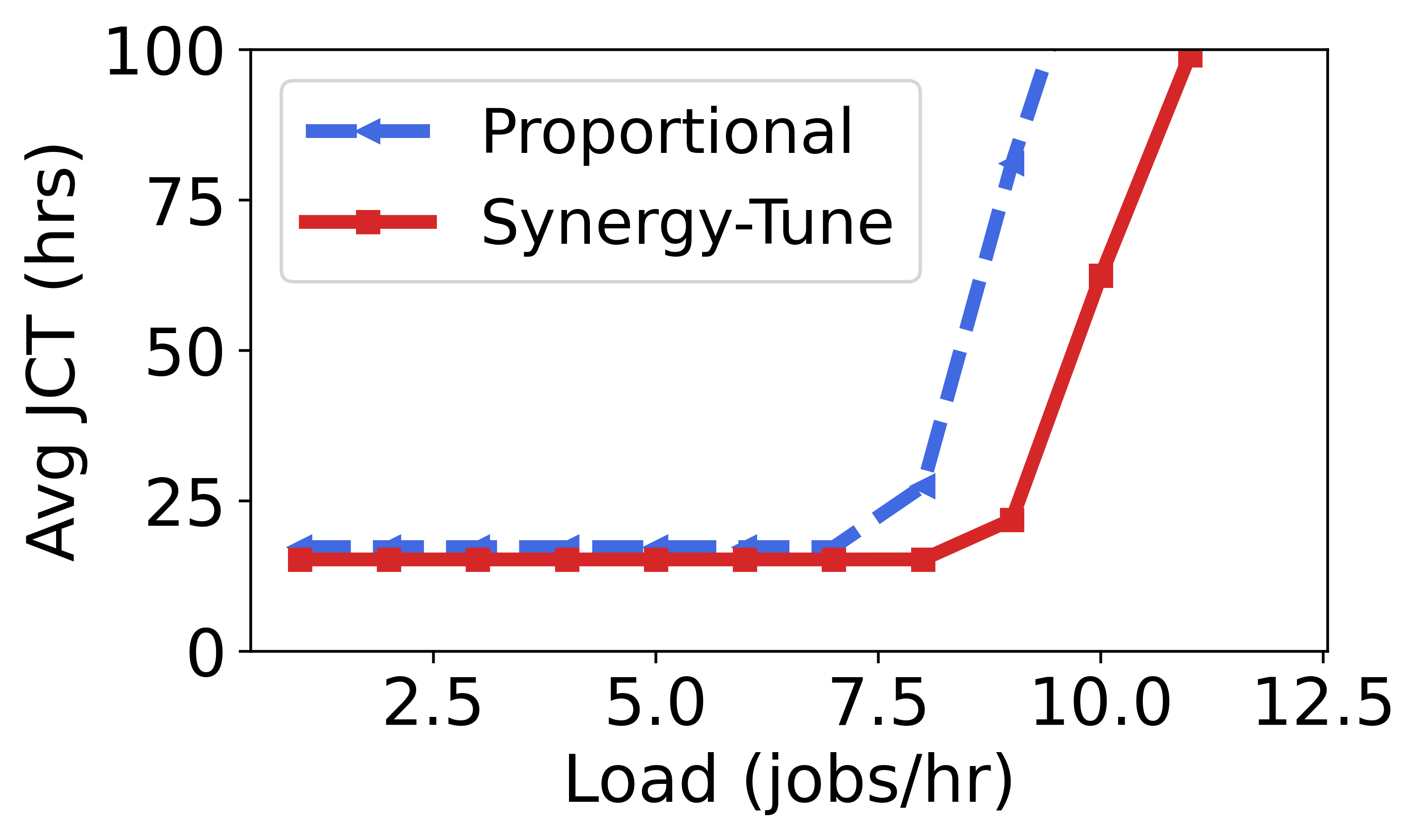} }}%
	\quad
	\subfloat[CDF (9 jobs/hr) \label{fig-sim-fifo-cdf}]{{\includegraphics[width=.22\textwidth]{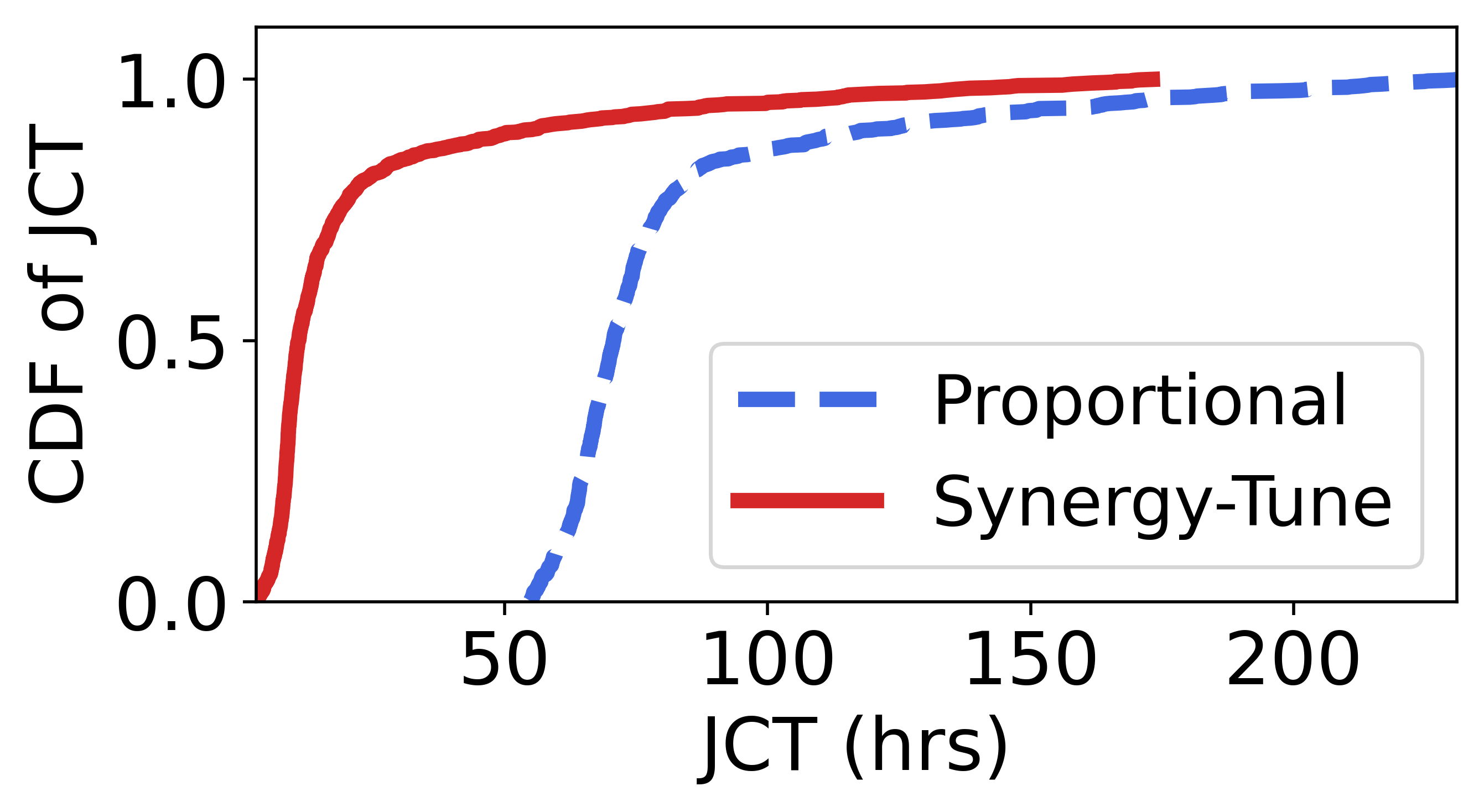} }}%

	\mycaption{Average JCT and CDF for FIFO}{\sysname improves the average JCT significantly compared to \fair allocation for varying cluster load. At a load of 9 jobs/hr, \sysname reduces average JCT from 81hrs to 22hrs, which is close to the upperbound of 20hrs predicted by \sysopt. }

	\label{fig-eval-sim-fifo}
\end{figure}

\subsubsection{Simulation with production traces}
\label{sec-eval-philly}
We run simulated experiments on a cluster of 512 GPUs across 64 servers using a subrange of the publicly available Philly trace published by Microsoft~\cite{philly}.  We assume a workload split of (20,70,10) for this trial.  Table~\ref{tbl-philly-eval} lists the average JCT with \sysname and GPU-proportional scheduling for three different scheduling policies. Across all policies, \sysname is able to reduce the average JCT compared to GPU-proportional scheduling due to better split of resources between jobs. \revised{The gains in \sysname can be attributed to reallocating the underutilized resources from a job to a different, resource-sensitive job whose throughput can improve with the increased allocation.}

We show a detailed overview of the average and 99th percentile JCT for SRTF policy in Table~\ref{fig-sim-srtf-cdf-tbl}.
We split the set of 1000 monitored jobs into short (JCT $\textless$ 4 hrs) and long jobs. \sysname reduces the tail of the distribution by 2.2\myx for short jobs and the average JCT of both long and short jobs by 15\%. For each of the 1000 monitored jobs, we plot the individual job speedup with respect to \fair scheduling in Figure~\ref{fig-sim-srtf-impr}. We see that \sysname speeds up jobs by upto 9\myx using better resource allocations.

\subsubsection{Simulation with varying load}
\label{sec-eval-sim}
We run simulated experiments on a cluster of 128 GPUs across 16 servers using production-derived traces. 
We evaluate \sysname against \fair allocation mechanism for 4 different scheduling policies - FIFO, SRTF, LAS and FTF. We run dynamic workload traces, where jobs arrive continuously at a rate governed by a Poisson distribution. 
We show results for both single-GPU traces (where all jobs request 1 GPU) and multi-GPU traces (where jobs request upto 16 GPUs). \revised{Our metric of evaluation is the average JCT of a set of 1000 jobs in cluster steady state.}

We show the results for three scenarios : LAS (multi-GPU trace) in Figure~\ref{fig-eval-sim-las}, \revised{SRTF (multi-GPU trace) in Figure~\ref{fig-eval-sim-srtf}}, and  FIFO (single GPU trace) in Figure~\ref{fig-eval-sim-fifo}. 
 In all cases, we assume a workload split of (20,70,10). We plot both average JCT and the CDF of job completion times for a specific cluster load in both scenarios. For the multi-GPU trace, we split the CDF into those for short and long jobs to distinctly differentiate the tail of the distribution. We make three key  observations.


\begin{figure}[!t]
	\centering
	\vspace*{-1em}
 \subfloat[GPU utilization\label{fig-eval-sim-gpufrag}]{{\includegraphics[width=.21\textwidth]{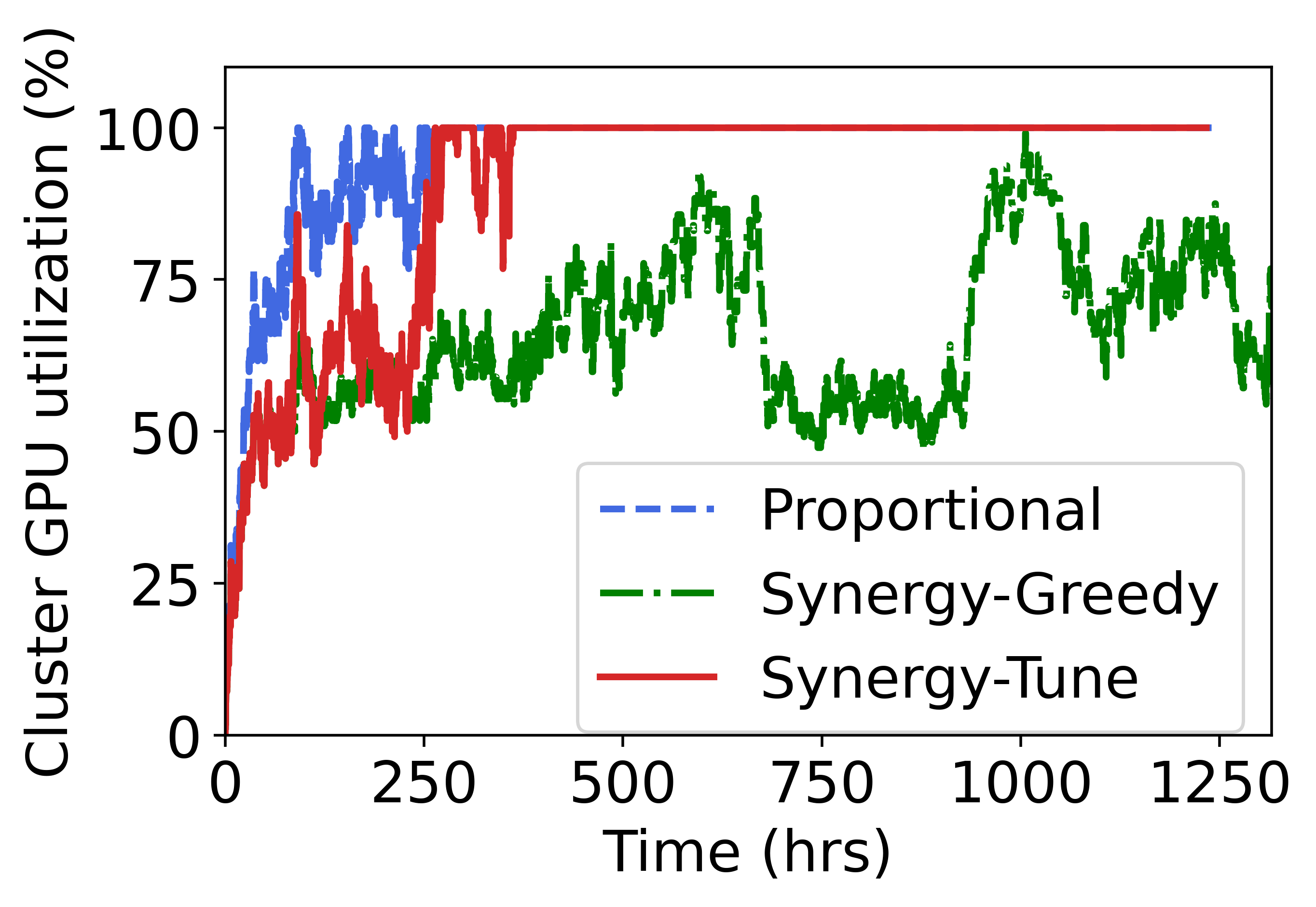} }}%
\quad
\subfloat[CPU utilization\label{fig-eval-cpu-util}]{{\includegraphics[width=.21\textwidth]{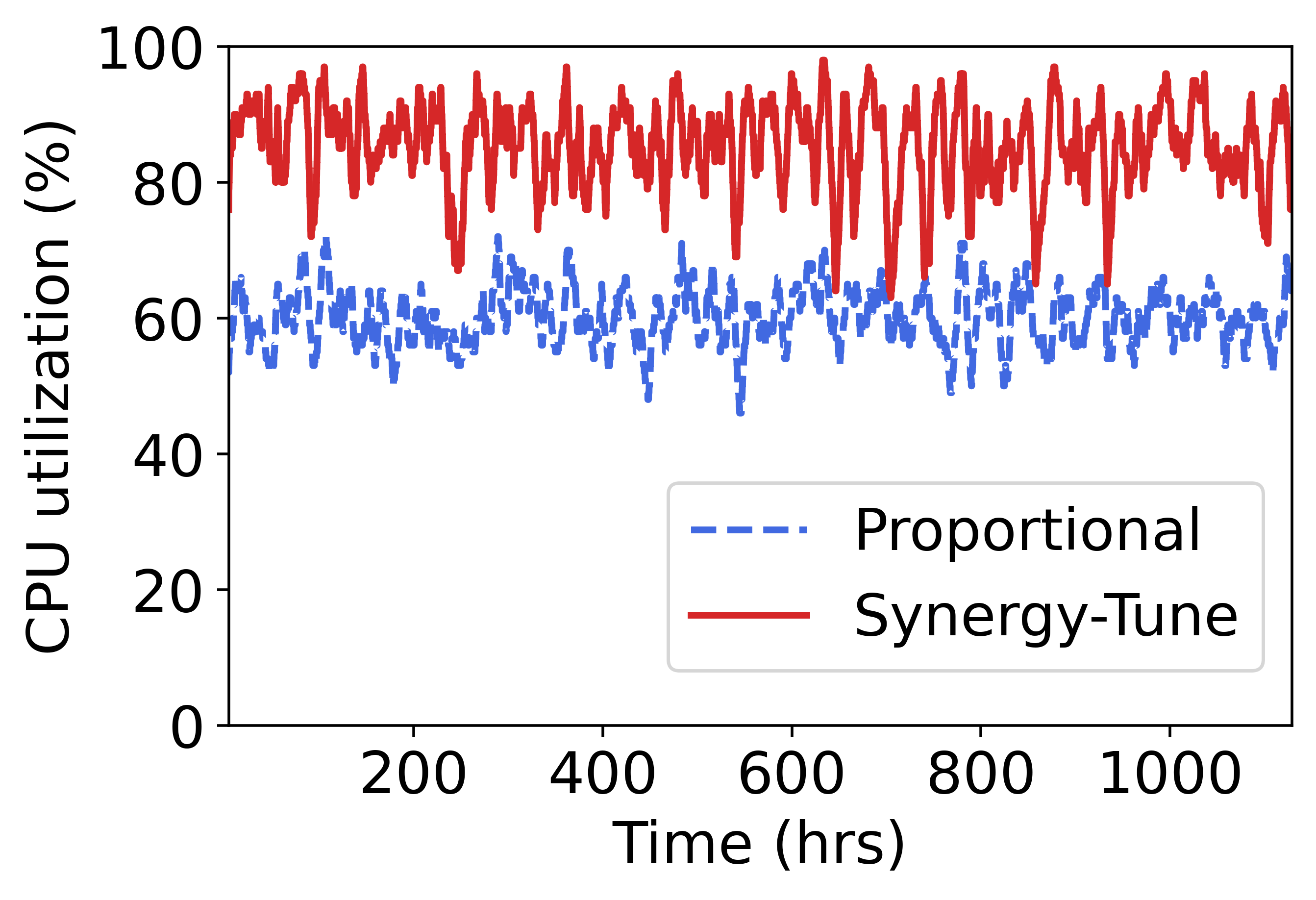} }}
	
 \vspace*{-1em}
 \caption{Cluster resource utilization}
 \vspace*{-1em}
\label{fig-util}
\end{figure}

First, \systune improves average JCT by \upto 3.4\myx in the single-GPU trace, and \upto 1.6\myx in the multi-GPU trace by speeding up resource sensitive jobs with disproportionate allocation.  The improvement in average JCT is higher as the load increases, because at low load the cluster is not at full capacity. As load increases, jobs start to get queued and incur queuing delay before being scheduled on the cluster. Since \sysname significantly speeds up individual jobs using disproportionate resource allocation, pending jobs can get scheduled faster, thereby reducing their queuing delays. Therefore \sysname improves cluster metrics by both reducing qeuing delays and speeding up individual jobs. \revised{Note that, in \fair allocation, at higher loads, all CPUs and memory in the system are allocated to the running jobs but they can still be underutilized by individual jobs.  We show later in Figure~\ref{fig-eval-cpu-util}, how \sysname's resource-sensitivity aware allocation improves CPU utilization in the system compared to \fair allocation. At low load, jobs are spread across the cluster and the unallocated CPU and memory is assigned to the jobs that benefit from additional auxiliary resources.}
Second, \systune is able to sustain a larger cluster load than \fair allocation. For multi-GPU scheduling with LAS, \systune reduced the 95th percentile JCT of long jobs by 2\myx. Third, the average JCT achieved with \systune is within 10\% of the optimal solution in all cases.

\begin{figure*}[!t]
	\centering
 \subfloat[Split=(20,70,10)\label{fig-sim-fifo-s1}]{{\includegraphics[width=.3\textwidth]{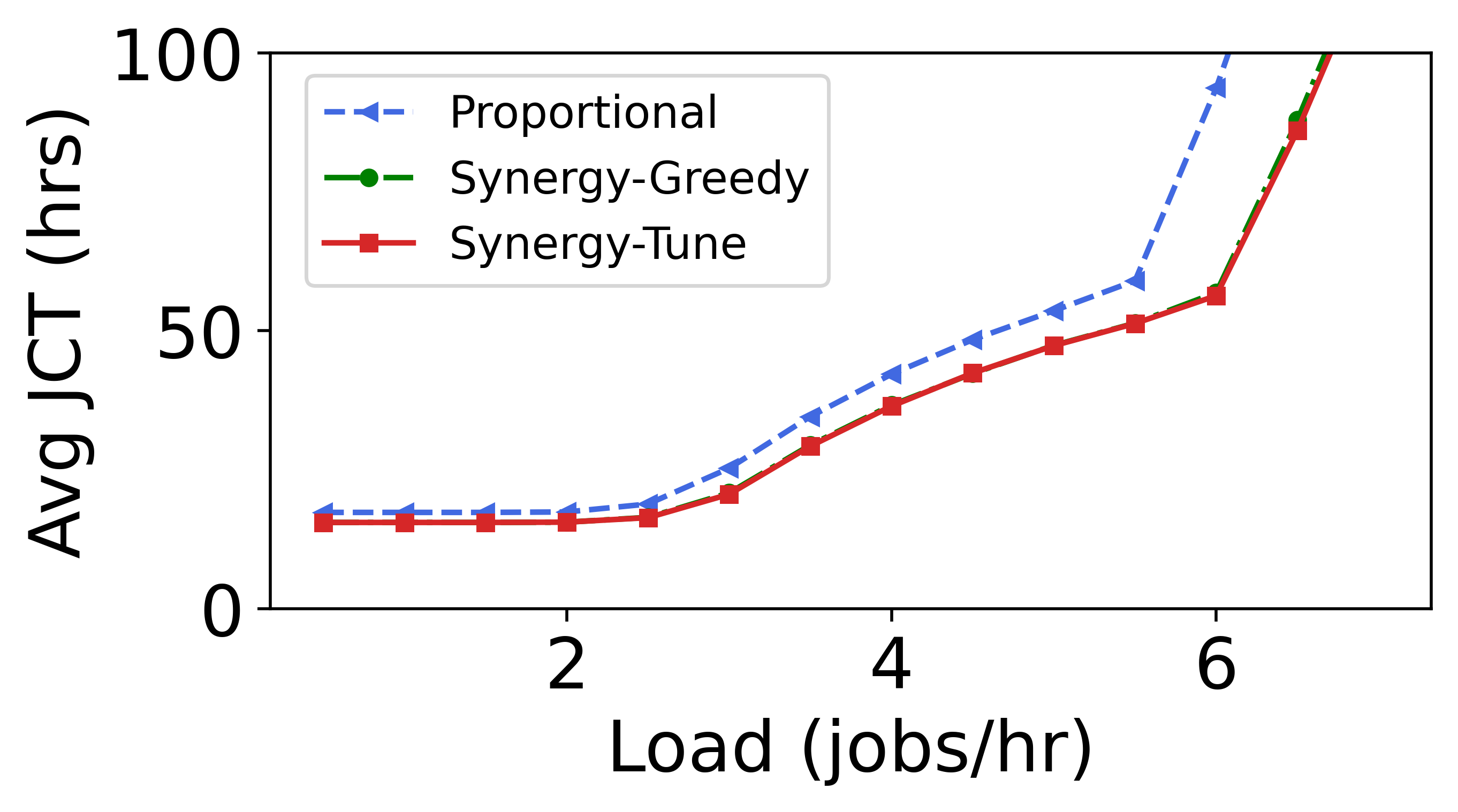} }}%
\qquad
\subfloat[Split=(33,33,33)\label{fig-sim-fifo-s2}]{{\includegraphics[width=.3\textwidth]{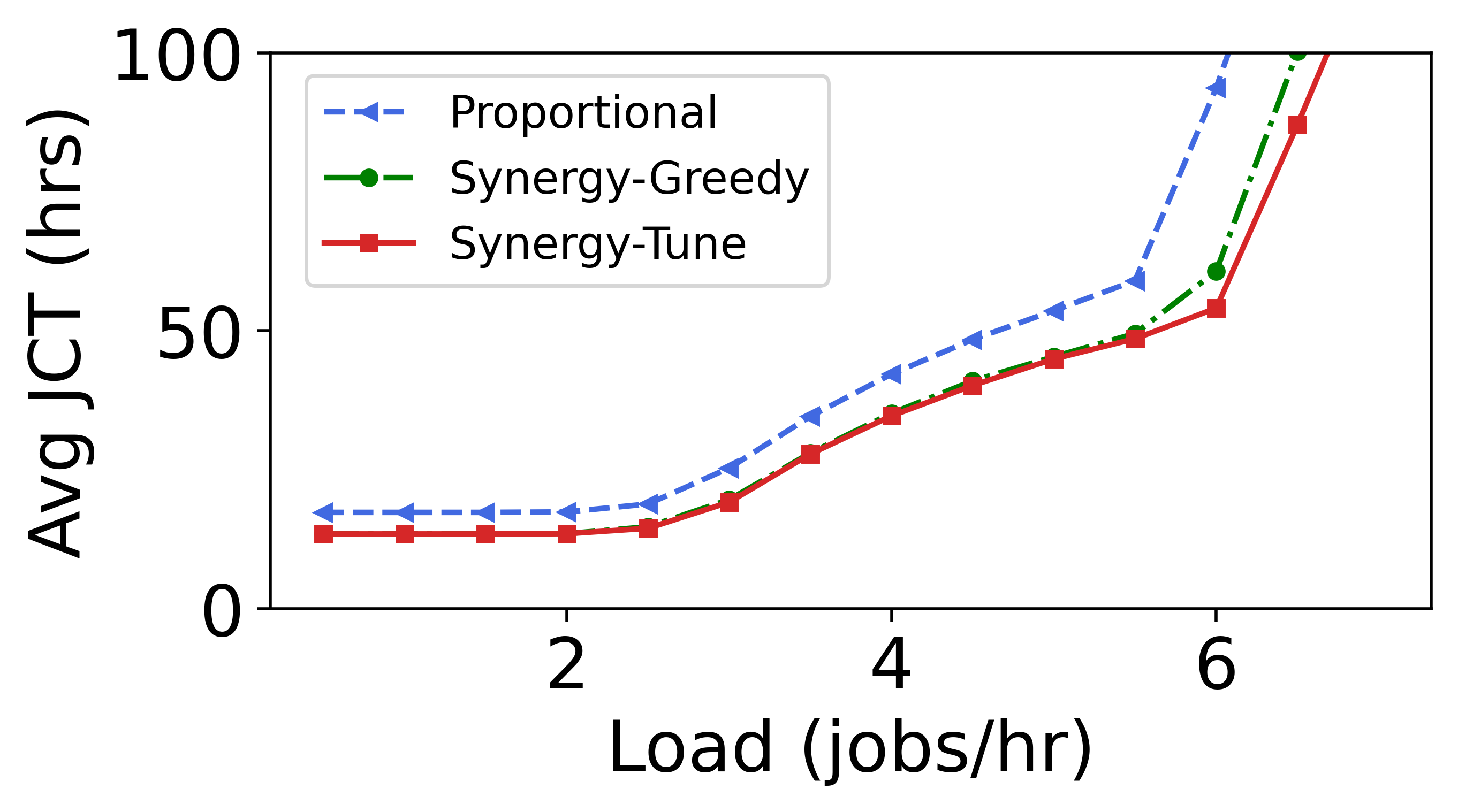} }}
\qquad
\subfloat[Split=(50,0,50)\label{fig-sim-fifo-s3}]{{\includegraphics[width=.3\textwidth]{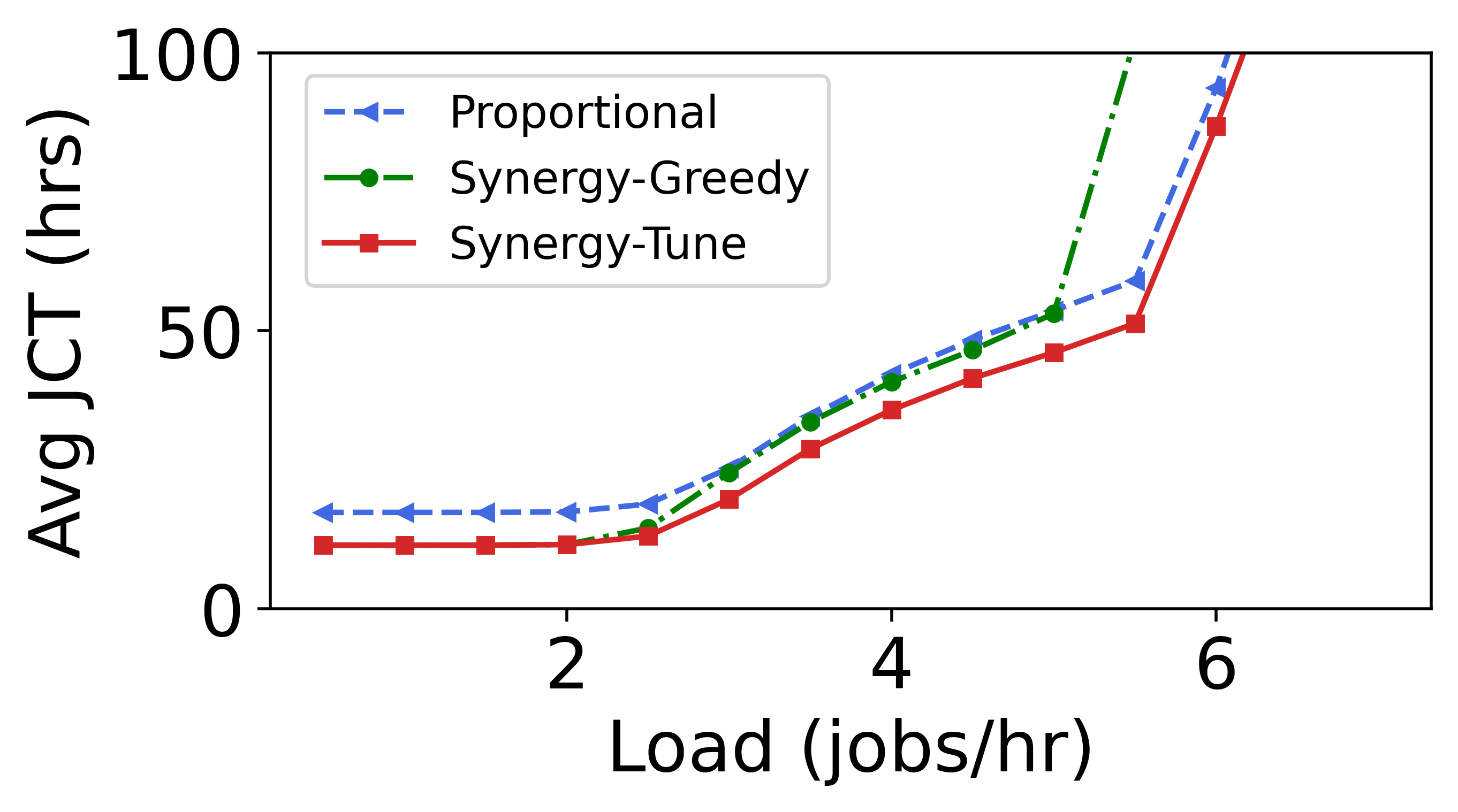} }}%
	
\vspace{-1em}
 \caption{Evaluation of \sysname with varying workload split}
\vspace{-1em}
\label{fig-fifo-vary}
\end{figure*}
  \begin{figure*}[h]
	\centering
 \subfloat[Ratio 4\label{fig-cpu-4}]{{\includegraphics[width=.3\textwidth]{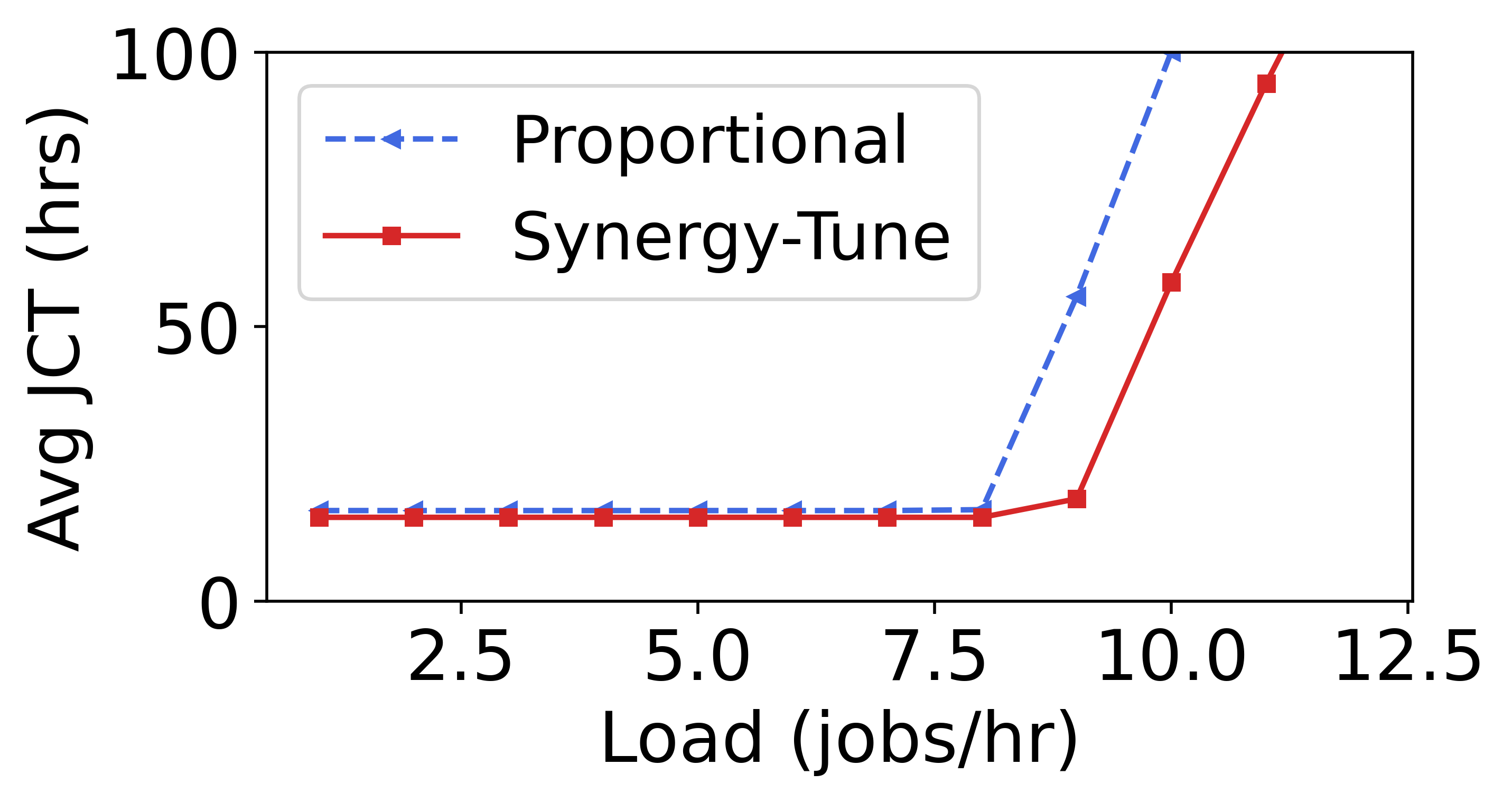} }}%
\quad
\subfloat[Ratio 5\label{fig-cpu-5}]{{\includegraphics[width=.3\textwidth]{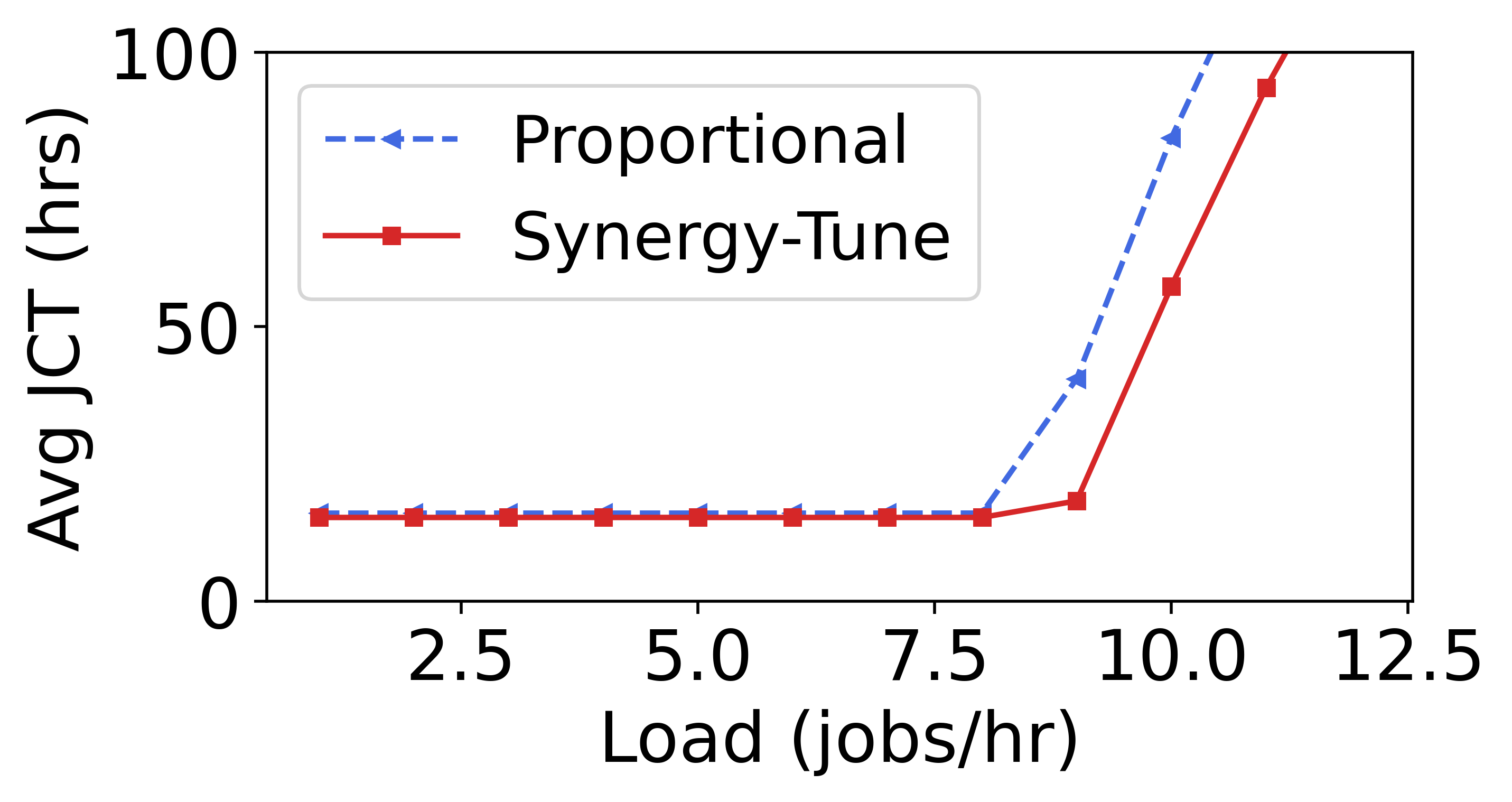} }}
\quad
\subfloat[Ratio 6\label{fig-cpu-6}]{{\includegraphics[width=.3\textwidth]{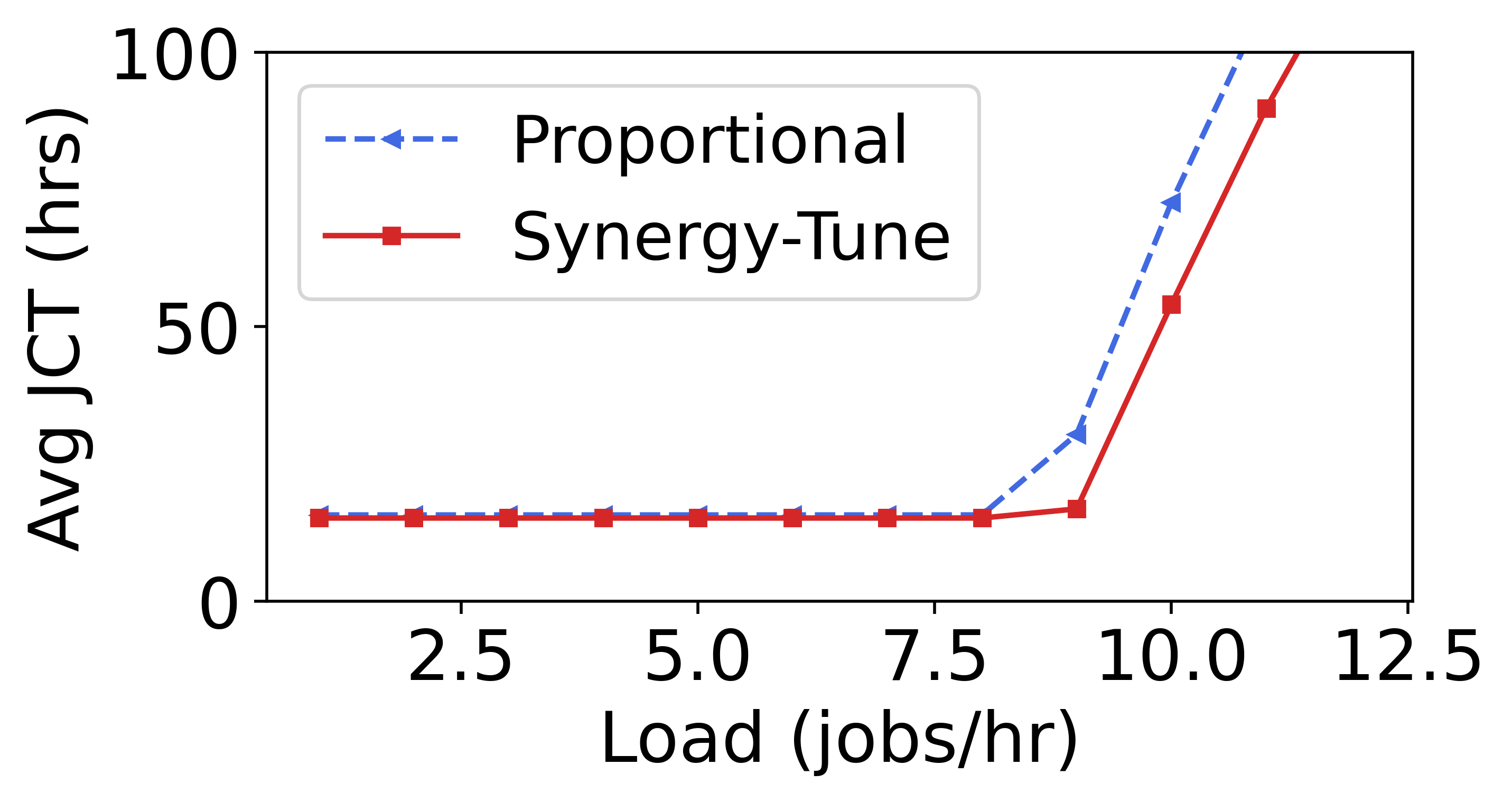} }}%
\vspace{-1em}
 \caption{Evaluation of \sysname across different  CPU:GPU Ratio}

\label{fig-eval-vary-cpu}
\end{figure*}

Similarly, for FTF scheduling policy, \systune observed  2.3\myx and 2\myx improvement in average JCT for a single-GPU and multi-GPU trace respectively.

\subsection{Impact of workload split}
\label{sec-eval-split}

 Workload split decides the percentage of resource sensitive jobs in the workload. As the percentage of speech and image models increase in the trace, there may not be enough spare CPU and memory resources to perform disproportionate allocation, as they are mostly CPU- and memory-hungry. Figure~\ref{fig-fifo-vary} plots the average JCT with varying load for 3 different workload splits with FIFO scheduling for multi-GPU jobs. As the percentage of resource-sensitive jobs increase, we observe that \sysgreedy breaks down, and ends up degrading JCTs significantly compared to a \fair allocation. This is because, the naive greedy technique results in resource fragmentation when the demand along CPU and memory dimensions are high, leaving several GPUs underutilized. Whereas, by the design of \systune, it allocates at least as many resources required to achieve the throughput of \fair allocation; therefore, even in the worst case workload split shown in Figure~\ref{fig-sim-fifo-s3}, where all the jobs are CPU- and memory-sensitive, \systune performs as good as \fair allocation.

 \vheading{Resource utilization}. Figure~\ref{fig-eval-sim-gpufrag} plots the GPU allocation over time for the workload in Figure~\ref{fig-sim-fifo-s3} at a load of 5.5 jobs/hr where the cluster GPU demand is higher than 100\%. While \systune is able to sustain a higher load by finishing jobs faster, \sysgreedy severely under-utilizes GPU resources throughout the workload, trading it off for higher CPU and memory allocation. At low loads, as shown in Figure~\ref{fig-eval-cpu-util}, 
 \fair allocation only utilized 60\% of the available CPU resources,  while \systune utilized it \upto a 90\%, resulting in  1.5\myx lower average JCT.

\subsection{Impact of CPU:GPU ratio}
\label{sec-eval-cpu-cores}
While our prior experiments assume a CPU:GPU ratio of 3 (similar to the NVIDIA DGX-2), Figure~\ref{fig-eval-vary-cpu} plots the average JCT for a FIFO scheduler on a single-GPU trace as we increase cluster load and vary the CPU:GPU ratio from 4 to 6 (corresponding to other server SKUs in Table~\ref{fig-mot-sku}).  As the CPU:GPU ratio in a server increases, the baseline \fair scheduler gets more CPU cores per GPU, thereby reducing data stalls in the baseline. This in turn, reduces the gap between \fair and \systune. Despite that, at a load of 9 jobs/hr, \systune lowers the avg JCT by 3.4\myx, 3\myx, 2.2\myx, and 1.8\myx for a CPU:GPU ratio for 3, 4, 5 and 6 respectively.

\subsection{Comparison to \sysopt}
\label{sec-eval-opt}
Calculating optimal allocations for every scheduling round with \sysopt can be quite expensive, especially for large cluster sizes. We experimentally validated that the time taken for per-round allocations for \sysopt increases exponentially with increasing cluster sizes, while that for \systune is hardly a second. 
We also show experimentally that the allocations given by \systune are close to those estimated by \sysopt in \sref{sec-eval-deploy} and \sref{sec-eval-sim}. For a cluster size of 128 GPUs used in our experiments, \systune converges at allocations that are within 10\% of the optimal value, 200\myx faster than \sysopt.

\begin{figure}[!t]
\vspace{-1em}
\centering
 \includegraphics[width=.48\textwidth]{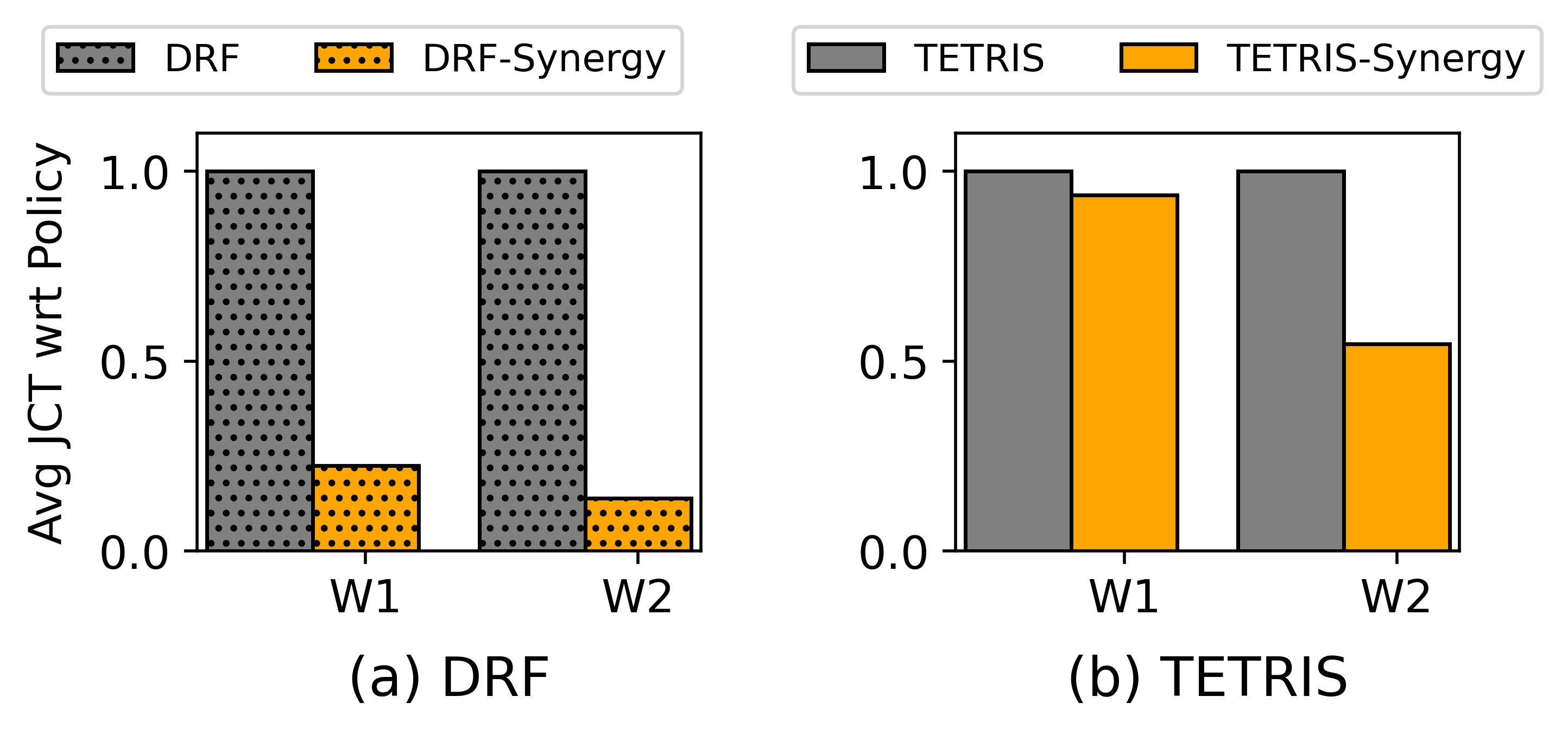}%
 \vspace{-1em}
\caption{Comparison to big data scheduling policies}
\label{fig-eval-bigdata}
\vspace{-1em}
\end{figure}

\subsection{Comparison to DRF and Tetris}
\label{sec-eval-big-data}
Big data schedulers like Dominant Resource Fairness (DRF)~\cite{drf} and Tetris~\cite{tetris} have explored multi-dimensional resource allocation for map-reduce jobs. DNN jobs have different properties when compared to big-data jobs. DNN jobs are gang-scheduled, meaning they can run only when all the GPUs requested by them are available on the cluster at once. Further, the auxiliary resource requirements like CPU and memory are fungible unlike the GPU demand. DRF and Tetris assume resources to be statically allocated throughout the lifetime of a job, whereas \sysname assumes these resources to be fungible and could result in varied allocations throughout the lifetime of a DNN job. Furthermore, profiling the DNN job's resource demands is unique to \sysname; big data schedulers assume that the job request already encodes resource demands across all dimensions. To evaluate \sysname against these policies, we assume that the best-case resource requirement for CPU and memory is fed as input to the bigdata scheduling policies using \sysname's profiling mechanism.

On a cluster of 128 GPUs, we evaluate these policies on two different workload compositions : W1 (20,70,10), and W2 (50,0,50) and compare the naive policy with its \sysname-variant, which allows resource tuning. W1 represents a workload split with a good mix of resource-sensitive as well as resource-insensitive jobs. W2 is a workload dominated by resource-sensitive jobs, which is one of the worst-case scenarios for multi-dimensional scheduling as it could lead to GPU fragmentation (explained in \sref{sec-eval-split})  

We plot the results in Figure~\ref{fig-eval-bigdata}. Tuning resource allocation across jobs using \sysname reduced the average JCT of DRF by 7.2\myx and that of Tetris by 1.8\myx for the  workload split W2. This is because \sysname is able to allocate auxiliary resources in a fungible-manner every round, whereas the big-data scheduler's static allocations performs similar to greedy techniques, resulting in GPU fragmentation, and thereby degrading the overall cluster metrics.
\sysname performs the best in each scenario as it uses the best-case resource demands of jobs to perform fungible, disproportionate allocation. 

	\section{Discussion and Future Work}
\label{sec-disc}
In this section, we elaborate on some of the assumptions made by \sysname, derived from our experiences with large scale deployed cluster schedulers at Microsoft, and discuss 
what happens if these assumptions are relaxed.

\vheading{\revised{Homogeneous clusters}}. \revised{Scheduling in \sysname assumes that the GPU cluster is homogeneous. This assumption is based on the practical observation that our clusters have thousands of accelerators per homogeneous cluster~\cite{philly}. While there is heterogeneity in hardware across clusters, it is often the case that users select one homogeneous cluster to run their job in production.
For instance, a production cluster could have  two homogeneous virtual clusters (VCs), each comprising of a specific generation of GPU. Each VC is managed separately, and assigned to a specific task - training or inference, for predictable performance. 
While recent works have explored the impact of blurring these boundaries and scheduling across heterogeneous hardware ~\cite{gavel, gandivafair, allox}, such co-scheduling poses several practical challenges~\cite{pai}. For example, some tasks such as low-latency inference are business-critical, user-facing applications which need to run on specific hardware, and need data isolation. Others have specific GPU memory requirements, or need advanced hardware features like NVLink. Hence, users in our production settings specify a specific instance type to run each of their jobs on. Hence it is useful for a scheduler to optimize resource utilization in the context of homogeneous clusters. That said, \sysname’s ideas can also be extended to a heterogeneous cluster by profiling CPU and memory requirements along an additional dimension - GPU type, at an additional profiling cost. The optimal algorithm can then maximize throughput based on a 3-dimensional resource-sensitivity matrix $W_j$. We present the formulation for this in the extended version of the paper~\cite{synergy-arxiv}.} 

\vheading{\revised{Use of MinIO}}. \revised{\sysname assumes the use of MinIO~\cite{coordl} because it is a DNN-aware caching mechanism that outperforms traditional OS page caching and allows performance predictability. It provides resource isolation and reduces storage fetch stalls~\cite{coordl}. If we do not use MinIO, we will have to profile the model at discrete memory allocations which will increase the profiling costs, and also potentially change the trends in profiling matrix. } 

\vheading{Preprocessing overhead}. Preprocessing for vision tasks includes random cropping and transformations of the image in the critical path. Reusing the same transformed images across epochs hurts accuracy~\cite{refurbish, coordl, checkfreq}, whereas it is practically infeasible to pre-process offline due to the prohibitive storage cost (dataset size * num epochs).  It is possible to alter the extent of CPU intensiveness by varying the number of augmentations performed. In this work, we have assumed that the augmentations required for each model are as specified by the published models themselves and we do not change this so as to not affect accuracy.  On the horizon, we do observe recent schemes such as RandAugment~\cite{randaugment}, AutoAugment~\cite{autoaugment} which consider more computationally-intensive augmentation schemes (and associated accuracy gains). Such a rising trend in extreme preprocessing, makes a strong case for a system like \sysname.

\vheading{\revised{Sharing storage and network}}. 
\revised{In our paper, we show how to reallocate CPUs and memory across jobs resident on the same server, for example, by co-locating a CPU-intensive task with a non CPU-intensive task. For our DNN training jobs, we assume that a dataset is downloaded locally and loaded into server memory when the job is started (constrained by the memory allocation limits). Prior work has similarly looked at co-locating network-intensive jobs with non network-intensive jobs~\cite{themis, tiresias}, but unlike \sysname, re-allocation of shared network bandwidth is not explicitly handled by those schedulers. We leave it to future work  to explore how ideas in \sysname can also be extended to reason about demands that individual jobs place on storage and network bandwidths.}


\vheading{\revised{GPU elasticity and sharing}}. \revised{While some recent works explore transparently changing the GPU allocation during the life of a job~\cite{pollux}, the impact of changing batch sizes and  hyperparameters on training accuracy is unclear for a wide variety of tasks. It is therefore practical to assume that the GPU demand of a job is constant throughout its lifetime as is the case for jobs in our production clusters.}

\revised{\sysname works by improving the throughput of jobs that are bottlenecked on data stalls. For jobs that have data stalls, GPU efficiency cannot be improved by multiplexing (spatial sharing) because they are waiting for input data. However, for a subset of jobs that are insensitive to auxiliary resource allocation, GPUs could be multiplexed between jobs. It would be interesting to explore how to impart resource-sensitivity awareness alongside GPU spatial sharing, which we leave for future work.}

\vheading{\revised{Tradeoff between consolidation and allocation}}. \revised{When multi-GPU jobs are split across physical servers, they may incur a penalty due to network communication~\cite{narayanan2019pipedream, gandiva}.  DNN jobs therefore prefer consolidation. In this work, we assume that no more than a server's worth of CPU or memory resources can be allocated to a job if its GPU demands can be satisfied by one server. However, we find that some jobs may benefit from giving up consolidation if the throughput gain due to increased CPU or memory allocation is higher than the penalty due to splitting. We leave the exploration of  the trade off between consolidation and allocation, while taking into account the network overhead, to future work.}

\vheading{\revised{Leveraging model and pipeline parallelism}}. \revised{Our evaluation assumes distributed data-parallel jobs. But model and pipeline parallel execution schemes also have an input stage that ingest and pre-process data.  Unlike data-parallel training, each stage in the pipeline might have a different CPU-GPU and memory-GPU requirement.  While these jobs would have to be profiled to identify the CPU and memory sensitivity of each stage of the pipeline, \sysname’s contributions directly carry forward to such settings.}

\section{Related Work}
\vheading{DNN cluster schedulers}.
A number of recent schedulers for DNN workloads each focus on improving a certain objective; Cluster utilization (Gandiva~\cite{gandiva}), JCT (Tiresias~\cite{tiresias}), and fairness (Themis~\cite{themis}, Gandiva-Fair~\cite{gandivafair}). Some have also looked at exploiting performance heterogeneity among accelerators to improve cluster objectives~\cite{gavel, allox}. All these schedulers assume GPU to be the dominant resource in the scheduling task; i.e., a user requests a fixed number of GPUs for her DNN job, and when the requested number of GPUs are all available, the job is scheduled to run. Rather than allocating a fixed number of GPUs, building on GPU-elasticity for a single job~\cite{andrew}, some recent schedulers like AFS~\cite{afs} and Pollux~\cite{pollux} leverage throughput metrics to provide GPU elasticity in multi-tenant clusters (in addition to tuning batch size and learning rate).  However, in all these cases, auxiliary resources such as CPU and memory are allocated proportional to the number of GPUs allocated to the job. Existing schedulers thus ignore \textit{resource-sensitivity} of the DNN tasks to CPU and memory.  \sysname shows that, irrespective of the number of GPUs allocated, auxiliary resource-sensitive allocation is crucial to achieve better cluster utilization.

\vheading{Big data schedulers}. Our work builds upon the insights drawn from the rich literature of schedulers for big data jobs~\cite{yarn, mesos, graphene, carbyne, tetris, drf}. Big data schedulers like Tetris~\cite{tetris}, and DRF~\cite{drf} have looked at the problem of multi dimensional resource allocation for big data jobs. They propose new scheduling policies aimed at optimizing a specific cluster objective for jobs whose resource demands are prior known. 
In contrast, the primary resource in a DNN job is the accelerator (GPU), whose requirement is specified by the job; other resources are fungible. Our work exploits this insight to perform disproportionate allocations by profiling job resource sensitivity, and then appropriately packing them onto servers.  

\vheading{Data stalls}. Recent, deep characterization studies explored the impact of CPU and memory on individual DNN jobs~\cite{coordl, tfdata} 
Unlike prior work that focuses on individual jobs, the focus of our paper is on the tricks we can play when we schedule multiple jobs together in a cluster. 

\vheading{\revised{Disaggregated data prep}}. \revised{There have been recent orthogonal efforts that aim at reducing the cost of data preprocessing, and thereby the load on CPUs using disaggregated data prep~\cite{fbstalls}. However, one has to pay the network cost of shuffling preprocessed tensors, which could quickly become the bottleneck especially for vision models with rich datasets. \sysname on the other hand, assumes standard pre-processing pipelines at the training servers, and aims to reduce the cost of pre-processing using better resource allocation.} 
	
\section{Conclusion}
This paper introduces \sysname, a resource-sensitive scheduler for DNN training jobs. \sysname is based on the insight that not all jobs exhibit the same level of sensitivity to CPU and memory allocation during DNN training; breaking the shackles of \fair allocation can result in improved utilization of existing cluster resources and improved job and cluster-wide objectives. Our experiments on physical and large simulated clusters show that \sysname can reduce average JCT by upto 3.4\myx over \fair allocation. 

	\bibliographystyle{plain}
	\bibliography{all}

\begin{thebibliography}{10}

\bibitem{aws-p3}
Amazon {EC2 P3 - Ideal for Machine Learning and HPC - AWS}.
\newblock \url{https://aws.amazon.com/ec2/instance-types/#p3}.

\bibitem{az-ncv3}
{Azure NC\_v3 Series}.
\newblock
  \url{https://docs.microsoft.com/en-us/azure/virtual-machines/ncv3-series}.

\bibitem{az-ndv2}
{Azure ND\_v2 Series}.
\newblock
  \url{https://docs.microsoft.com/en-us/azure/virtual-machines/ndv2-series}.

\bibitem{grpc}
g{RPC}.
\newblock \url{https://grpc.io/}.

\bibitem{philly}
Microsoft philly traces.
\newblock \url{https://github.com/msr-fiddle/philly-traces}.

\bibitem{dgx2}
{NVIDIA} {DGX}-2: {Enterprise} {AI} {Research} {System}.
\newblock \url{https://www.nvidia.com/en-us/data-center/dgx-2/}.

\bibitem{dali}
{NVIDIA DALI}.
\newblock \url{https://github.com/NVIDIA/DALI}, 2018.

\bibitem{pytorch}
Pytorch.
\newblock \url{https://github.com/pytorch/pytorch}, 2019.

\bibitem{wmt16}
Wmt16.
\newblock \url{http://www.statmt.org/wmt16/}, 2020.

\bibitem{lp2}
Dimitris Bertsimas and John~N. Tsitsiklis.
\newblock {\em Introduction to linear organisation}, volume~6 of {\em Athena
  scientific optimization and computation series}.
\newblock Athena Scientific, 1997.

\bibitem{kubernetes}
Brendan Burns, Brian Grant, David Oppenheimer, Eric~A. Brewer, and John Wilkes.
\newblock Borg, omega, and kubernetes.
\newblock {\em Commun. {ACM}}, 59(5):50--57, 2016.

\bibitem{gandivafair}
Shubham Chaudhary, Ramachandran Ramjee, Muthian Sivathanu, Nipun Kwatra, and
  Srinidhi Viswanatha.
\newblock Balancing efficiency and fairness in heterogeneous {GPU} clusters for
  deep learning.
\newblock In {\em EuroSys '20: Fifteenth EuroSys Conference 2020, Heraklion,
  Greece, April 27-30, 2020}, pages 1:1--1:16. {ACM}, 2020.

\bibitem{srtf}
Mark Crovella, Robert Frangioso, and Mor Harchol{-}Balter.
\newblock Connection scheduling in web servers.
\newblock In {\em 2nd {USENIX} Symposium on Internet Technologies and Systems,
  USITS'99, Boulder, Colorado, USA, October 11-14, 1999}. {USENIX}, 1999.

\bibitem{autoaugment}
Ekin~D. Cubuk, Barret Zoph, Dandelion Man{\'{e}}, Vijay Vasudevan, and Quoc~V.
  Le.
\newblock Autoaugment: Learning augmentation strategies from data.
\newblock In {\em {IEEE} Conference on Computer Vision and Pattern Recognition,
  {CVPR} 2019, Long Beach, CA, USA, June 16-20, 2019}, pages 113--123. Computer
  Vision Foundation / {IEEE}, 2019.

\bibitem{randaugment}
Ekin~Dogus Cubuk, Barret Zoph, Jon Shlens, and Quoc Le.
\newblock Randaugment: Practical automated data augmentation with a reduced
  search space.
\newblock In {\em Advances in Neural Information Processing Systems 33: Annual
  Conference on Neural Information Processing Systems 2020, NeurIPS 2020,
  December 6-12, 2020, virtual}, 2020.

\bibitem{dai2017very}
Wei Dai, Chia Dai, Shuhui Qu, Juncheng Li, and Samarjit Das.
\newblock Very deep convolutional neural networks for raw waveforms.
\newblock In {\em 2017 IEEE International Conference on Acoustics, Speech and
  Signal Processing (ICASSP)}, pages 421--425. IEEE, 2017.

\bibitem{transformer-xl}
Zihang Dai, Zhilin Yang, Yiming Yang, Jaime~G. Carbonell, Quoc~Viet Le, and
  Ruslan Salakhutdinov.
\newblock Transformer-xl: Attentive language models beyond a fixed-length
  context.
\newblock In {\em Proceedings of the 57th Conference of the Association for
  Computational Linguistics, {ACL} 2019, Florence, Italy, July 28- August 2,
  2019, Volume 1: Long Papers}, pages 2978--2988. Association for Computational
  Linguistics, 2019.

\bibitem{defferrard2016fma}
Micha{\"e}l Defferrard, Kirell Benzi, Pierre Vandergheynst, and Xavier Bresson.
\newblock Fma: A dataset for music analysis.
\newblock {\em arXiv preprint arXiv:1612.01840}, 2016.

\bibitem{dgx1}
{NVIDIA DGX-1}.
\newblock {\url{https://www.nvidia.com/en-us/data-center/dgx-1/}}.

\bibitem{cvxpy}
Steven Diamond and Stephen~P. Boyd.
\newblock {CVXPY:} {A} python-embedded modeling language for convex
  optimization.
\newblock {\em J. Mach. Learn. Res.}, 17:83:1--83:5, 2016.

\bibitem{first_fit}
Gy{\"{o}}rgy D{\'{o}}sa and Jir{\'{\i}} Sgall.
\newblock First fit bin packing: {A} tight analysis.
\newblock In Natacha Portier and Thomas Wilke, editors, {\em 30th International
  Symposium on Theoretical Aspects of Computer Science, {STACS} 2013, February
  27 - March 2, 2013, Kiel, Germany}, volume~20 of {\em LIPIcs}, pages
  538--549. Schloss Dagstuhl - Leibniz-Zentrum f{\"{u}}r Informatik, 2013.

\bibitem{lp1}
Bernd G{\"{a}}rtner and Jir{\'{\i}} Matousek.
\newblock {\em Understanding and using linear programming}.
\newblock Universitext. Springer, 2007.

\bibitem{drf}
Ali Ghodsi, Matei Zaharia, Benjamin Hindman, Andy Konwinski, Scott Shenker, and
  Ion Stoica.
\newblock Dominant resource fairness: Fair allocation of multiple resource
  types.
\newblock In {\em Proceedings of the 8th {USENIX} Symposium on Networked
  Systems Design and Implementation, {NSDI} 2011, Boston, MA, USA, March 30 -
  April 1, 2011}. {USENIX} Association, 2011.

\bibitem{openimages}
Google.
\newblock Open images dataset.
\newblock \url{https://opensource.google/projects/open-images-dataset}.

\bibitem{tetris}
Robert Grandl, Ganesh Ananthanarayanan, Srikanth Kandula, Sriram Rao, and
  Aditya Akella.
\newblock Multi-resource packing for cluster schedulers.
\newblock In {\em {ACM} {SIGCOMM} 2014 Conference, SIGCOMM'14, Chicago, IL,
  USA, August 17-22, 2014}, pages 455--466. {ACM}, 2014.

\bibitem{carbyne}
Robert Grandl, Mosharaf Chowdhury, Aditya Akella, and Ganesh Ananthanarayanan.
\newblock Altruistic scheduling in multi-resource clusters.
\newblock In {\em 12th {USENIX} Symposium on Operating Systems Design and
  Implementation, {OSDI} 2016, Savannah, GA, USA, November 2-4, 2016}, pages
  65--80. {USENIX} Association, 2016.

\bibitem{graphene}
Robert Grandl, Srikanth Kandula, Sriram Rao, Aditya Akella, and Janardhan
  Kulkarni.
\newblock {GRAPHENE}: Packing and dependency-aware scheduling for data-parallel
  clusters.
\newblock In {\em 12th {USENIX} Symposium on Operating Systems Design and
  Implementation ({OSDI} 16)}, pages 81--97, Savannah, GA, November 2016.
  {USENIX} Association.

\bibitem{tiresias}
Juncheng Gu, Mosharaf Chowdhury, Kang~G. Shin, Yibo Zhu, Myeongjae Jeon, Junjie
  Qian, Hongqiang~Harry Liu, and Chuanxiong Guo.
\newblock Tiresias: {A} {GPU} cluster manager for distributed deep learning.
\newblock In {\em 16th {USENIX} Symposium on Networked Systems Design and
  Implementation, {NSDI} 2019, Boston, MA, February 26-28, 2019}, pages
  485--500. {USENIX} Association, 2019.

\bibitem{deepspeech}
Awni~Y. Hannun, Carl Case, Jared Casper, Bryan Catanzaro, Greg Diamos, Erich
  Elsen, Ryan Prenger, Sanjeev Satheesh, Shubho Sengupta, Adam Coates, and
  Andrew~Y. Ng.
\newblock Deep speech: Scaling up end-to-end speech recognition.
\newblock {\em CoRR}, abs/1412.5567, 2014.

\bibitem{he2016deep}
Kaiming He, Xiangyu Zhang, Shaoqing Ren, and Jian Sun.
\newblock Deep residual learning for image recognition.
\newblock In {\em Proceedings of the IEEE conference on computer vision and
  pattern recognition}, pages 770--778, 2016.

\bibitem{mesos}
Benjamin Hindman, Andy Konwinski, Matei Zaharia, Ali Ghodsi, Anthony~D. Joseph,
  Randy~H. Katz, Scott Shenker, and Ion Stoica.
\newblock Mesos: {A} platform for fine-grained resource sharing in the data
  center.
\newblock In {\em Proceedings of the 8th {USENIX} Symposium on Networked
  Systems Design and Implementation, {NSDI} 2011, Boston, MA, USA, March 30 -
  April 1, 2011}. {USENIX} Association, 2011.

\bibitem{afs}
Changho Hwang, Taehyun Kim, Sunghyun Kim, Jinwoo Shin, and KyoungSoo Park.
\newblock Elastic resource sharing for distributed deep learning.
\newblock In {\em 18th {USENIX} Symposium on Networked Systems Design and
  Implementation, {NSDI} 2021, April 12-14, 2021}, pages 721--739. {USENIX}
  Association, 2021.

\bibitem{jeon2018multi}
Myeongjae Jeon, Shivaram Venkataraman, Amar Phanishayee, Junjie Qian, Wencong
  Xiao, and Fan Yang.
\newblock Analysis of large-scale multi-tenant {GPU} clusters for {DNN}
  training workloads.
\newblock In {\em 2019 $USENIX$ Annual Technical Conference (USENIX ATC 19)},
  pages 947--960, 2019.

\bibitem{krizhevsky2012imagenet}
Alex Krizhevsky, Ilya Sutskever, and Geoffrey~E Hinton.
\newblock Imagenet classification with deep convolutional neural networks.
\newblock In {\em Advances in Neural Information Processing Systems}, pages
  1097--1105, 2012.

\bibitem{allox}
Tan~N. Le, Xiao Sun, Mosharaf Chowdhury, and Zhenhua Liu.
\newblock Allox: compute allocation in hybrid clusters.
\newblock In Angelos Bilas, Kostas Magoutis, Evangelos~P. Markatos, Dejan
  Kostic, and Margo Seltzer, editors, {\em EuroSys '20: Fifteenth EuroSys
  Conference 2020, Heraklion, Greece, April 27-30, 2020}, pages 31:1--31:16.
  {ACM}, 2020.

\bibitem{refurbish}
Gyewon Lee, Irene Lee, Hyeonmin Ha, Kyung{-}Geun Lee, Hwarim Hyun, Ahnjae Shin,
  and Byung{-}Gon Chun.
\newblock Refurbish your training data: Reusing partially augmented samples for
  faster deep neural network training.
\newblock In Irina Calciu and Geoff Kuenning, editors, {\em 2021 {USENIX}
  Annual Technical Conference, {USENIX} {ATC} 2021, July 14-16, 2021}, pages
  537--550. {USENIX} Association, 2021.

\bibitem{themis}
Kshiteej Mahajan, Arjun Balasubramanian, Arjun Singhvi, Shivaram Venkataraman,
  Aditya Akella, Amar Phanishayee, and Shuchi Chawla.
\newblock Themis: Fair and efficient {GPU} cluster scheduling.
\newblock In {\em 17th {USENIX} Symposium on Networked Systems Design and
  Implementation, {NSDI} 2020, Santa Clara, CA, USA, February 25-27, 2020},
  pages 289--304. {USENIX} Association, 2020.

\bibitem{wikitext_2}
Stephen Merity, Caiming Xiong, James Bradbury, and Richard Socher.
\newblock Pointer sentinel mixture models.
\newblock In {\em 5th International Conference on Learning Representations,
  {ICLR} 2017, Toulon, France, April 24-26, 2017, Conference Track
  Proceedings}. OpenReview.net, 2017.

\bibitem{checkfreq}
Jayashree Mohan, Amar Phanishayee, and Vijay Chidambaram.
\newblock Checkfreq: Frequent, fine-grained {DNN} checkpointing.
\newblock In {\em 19th {USENIX} Conference on File and Storage Technologies,
  {FAST} 2021, February 23-25, 2021}, pages 203--216. {USENIX} Association,
  2021.

\bibitem{synergy-arxiv}
Jayashree Mohan, Amar Phanishayee, Janardhan Kulkarni, and Vijay Chidambaram.
\newblock Synergy: Resource sensitive {DNN} scheduling in multi-tenant
  clusters.
\newblock {\em CoRR}, abs/2110.06073, 2021.

\bibitem{coordl}
Jayashree Mohan, Amar Phanishayee, Ashish Raniwala, and Vijay Chidambaram.
\newblock Analyzing and mitigating data stalls in {DNN} training.
\newblock {\em Proc. {VLDB} Endow.}, 14(5):771--784, 2021.

\bibitem{tfdata}
Derek~Gordon Murray, Jiri Simsa, Ana Klimovic, and Ihor Indyk.
\newblock tf.data: {A} machine learning data processing framework.
\newblock {\em Proc. {VLDB} Endow.}, 14(12):2945--2958, 2021.

\bibitem{narayanan2019pipedream}
Deepak Narayanan, Aaron Harlap, Amar Phanishayee, Vivek Seshadri, Nikhil~R
  Devanur, Gregory~R Ganger, Phillip~B Gibbons, and Matei Zaharia.
\newblock {PipeDream: Generalized Pipeline Parallelism for DNN Training}.
\newblock In {\em Proceedings of the 27th ACM Symposium on Operating Systems
  Principles}, pages 1--15. ACM, 2019.

\bibitem{gavel}
Deepak Narayanan, Keshav Santhanam, Fiodar Kazhamiaka, Amar Phanishayee, and
  Matei Zaharia.
\newblock Heterogeneity-aware cluster scheduling policies for deep learning
  workloads.
\newblock In {\em 14th {USENIX} Symposium on Operating Systems Design and
  Implementation, {OSDI} 2020, Virtual Event, November 4-6, 2020}, pages
  481--498. {USENIX} Association, 2020.

\bibitem{las}
Misja Nuyens and Adam Wierman.
\newblock The foreground-background queue: {A} survey.
\newblock {\em Perform. Evaluation}, 65(3-4):286--307, 2008.

\bibitem{andrew}
Andrew Or, Haoyu Zhang, and Michael~J. Freedman.
\newblock Resource elasticity in distributed deep learning.
\newblock In {\em Proceedings of Machine Learning and Systems 2020, MLSys 2020,
  Austin, TX, USA, March 2-4, 2020}. mlsys.org, 2020.

\bibitem{librispeech}
Vassil Panayotov, Guoguo Chen, Daniel Povey, and Sanjeev Khudanpur.
\newblock Librispeech: An {ASR} corpus based on public domain audio books.
\newblock In {\em 2015 {IEEE} International Conference on Acoustics, Speech and
  Signal Processing, {ICASSP} 2015, South Brisbane, Queensland, Australia,
  April 19-24, 2015}, pages 5206--5210. {IEEE}, 2015.

\bibitem{optimus}
Yanghua Peng, Yixin Bao, Yangrui Chen, Chuan Wu, and Chuanxiong Guo.
\newblock Optimus: an efficient dynamic resource scheduler for deep learning
  clusters.
\newblock In {\em Proceedings of the Thirteenth EuroSys Conference, EuroSys
  2018, Porto, Portugal, April 23-26, 2018}, pages 3:1--3:14. {ACM}, 2018.

\bibitem{lstm}
PyTorch.
\newblock Word-level language modeling rnn.
\newblock \url{https: //github.com/pytorch/examples/tree/master/
  word_language_model}.

\bibitem{pollux}
Aurick Qiao, Sang~Keun Choe, Suhas~Jayaram Subramanya, Willie Neiswanger,
  Qirong Ho, Hao Zhang, Gregory~R. Ganger, and Eric~P. Xing.
\newblock Pollux: Co-adaptive cluster scheduling for goodput-optimized deep
  learning.
\newblock In {\em 15th {USENIX} Symposium on Operating Systems Design and
  Implementation, {OSDI} 2021, July 14-16, 2021}. {USENIX} Association, 2021.

\bibitem{russakovsky2015imagenet}
Olga Russakovsky, Jia Deng, Hao Su, Jonathan Krause, Sanjeev Satheesh, Sean Ma,
  Zhiheng Huang, Andrej Karpathy, Aditya Khosla, Michael Bernstein, et~al.
\newblock Imagenet large scale visual recognition challenge.
\newblock {\em International journal of computer vision}, 115(3):211--252,
  2015.

\bibitem{sandler2018mobilenetv2}
Mark Sandler, Andrew Howard, Menglong Zhu, Andrey Zhmoginov, and Liang-Chieh
  Chen.
\newblock Mobilenetv2: Inverted residuals and linear bottlenecks.
\newblock In {\em Proceedings of the IEEE Conference on Computer Vision and
  Pattern Recognition}, pages 4510--4520, 2018.

\bibitem{lp3}
Robert~J. Vanderbei.
\newblock {\em Linear programming - foundations and extensions}, volume~4 of
  {\em Kluwer international series in operations research and management
  service}.
\newblock Kluwer, 1998.

\bibitem{yarn}
Vinod~Kumar Vavilapalli, Arun~C. Murthy, Chris Douglas, Sharad Agarwal, Mahadev
  Konar, Robert Evans, Thomas Graves, Jason Lowe, Hitesh Shah, Siddharth Seth,
  Bikas Saha, Carlo Curino, Owen O'Malley, Sanjay Radia, Benjamin Reed, and
  Eric Baldeschwieler.
\newblock Apache hadoop {YARN:} yet another resource negotiator.
\newblock In {\em {ACM} Symposium on Cloud Computing, {SOCC} '13, Santa Clara,
  CA, USA, October 1-3, 2013}, pages 5:1--5:16. {ACM}, 2013.

\bibitem{pai}
Qizhen Weng, Wencong Xiao, Yinghao Yu, Wei Wang, Cheng Wang, Jian He, Yong Li,
  Liping Zhang, Wei Lin, and Yu~Ding.
\newblock $\{$MLaaS$\}$ in the wild: Workload analysis and scheduling in
  $\{$Large-Scale$\}$ heterogeneous $\{$GPU$\}$ clusters.
\newblock In {\em 19th USENIX Symposium on Networked Systems Design and
  Implementation (NSDI 22)}, pages 945--960, 2022.

\bibitem{np_hard}
Gerhard~J. Woeginger.
\newblock There is no asymptotic {PTAS} for two-dimensional vector packing.
\newblock {\em Inf. Process. Lett.}, 64(6):293--297, 1997.

\bibitem{wu2016google}
Yonghui Wu, Mike Schuster, Zhifeng Chen, Quoc~V. Le, Mohammad Norouzi, Wolfgang
  Macherey, Maxim Krikun, Yuan Cao, Qin Gao, Klaus Macherey, Jeff Klingner,
  Apurva Shah, Melvin Johnson, Xiaobing Liu, Lukasz Kaiser, Stephan Gouws,
  Yoshikiyo Kato, Taku Kudo, Hideto Kazawa, Keith Stevens, George Kurian,
  Nishant Patil, Wei Wang, Cliff Young, Jason Smith, Jason Riesa, Alex Rudnick,
  Oriol Vinyals, Greg Corrado, Macduff Hughes, and Jeffrey Dean.
\newblock Google's neural machine translation system: Bridging the gap between
  human and machine translation.
\newblock {\em CoRR}, abs/1609.08144, 2016.

\bibitem{gandiva}
Wencong Xiao, Romil Bhardwaj, Ramachandran Ramjee, Muthian Sivathanu, Nipun
  Kwatra, Zhenhua Han, Pratyush Patel, Xuan Peng, Hanyu Zhao, Quanlu Zhang, Fan
  Yang, and Lidong Zhou.
\newblock Gandiva: Introspective cluster scheduling for deep learning.
\newblock In {\em 13th {USENIX} Symposium on Operating Systems Design and
  Implementation, {OSDI} 2018, Carlsbad, CA, USA, October 8-10, 2018}, pages
  595--610. {USENIX} Association, 2018.

\bibitem{antman}
Wencong Xiao, Shiru Ren, Yong Li, Yang Zhang, Pengyang Hou, Zhi Li, Yihui Feng,
  Wei Lin, and Yangqing Jia.
\newblock Antman: Dynamic scaling on {GPU} clusters for deep learning.
\newblock In {\em 14th {USENIX} Symposium on Operating Systems Design and
  Implementation, {OSDI} 2020, Virtual Event, November 4-6, 2020}, pages
  533--548. {USENIX} Association, 2020.

\bibitem{spark}
Matei Zaharia, Mosharaf Chowdhury, Michael~J. Franklin, Scott Shenker, and Ion
  Stoica.
\newblock Spark: Cluster computing with working sets.
\newblock In {\em 2nd {USENIX} Workshop on Hot Topics in Cloud Computing,
  HotCloud'10, Boston, MA, USA, June 22, 2010}. {USENIX} Association, 2010.

\bibitem{zhang2018shufflenet}
Xiangyu Zhang, Xinyu Zhou, Mengxiao Lin, and Jian Sun.
\newblock Shufflenet: An extremely efficient convolutional neural network for
  mobile devices.
\newblock In {\em Proceedings of the IEEE Conference on Computer Vision and
  Pattern Recognition}, pages 6848--6856, 2018.

\bibitem{fbstalls}
Mark Zhao, Niket Agarwal, Aarti Basant, Bugra Gedik, Satadru Pan, Mustafa
  Ozdal, Rakesh Komuravelli, Jerry Pan, Tianshu Bao, Haowei Lu, Sundaram
  Narayanan, Jack Langman, Kevin Wilfong, Harsha Rastogi, Carole{-}Jean Wu,
  Christos Kozyrakis, and Parik Pol.
\newblock Understanding and co-designing the data ingestion pipeline for
  industry-scale recsys training.
\newblock {\em CoRR}, abs/2108.09373, 2021.

\end{thebibliography}
	\appendix

\section{Appendix}

\subsection{\sysopt}
In this section, we describe the formulation and proof for \sysopt in a homogeneous GPU cluster.

\label{appendix-sec-opt}
\vheading{Problem Definition}.
Our goal is to allocate CPU and memory to each job so as to maximize the throughput, while guaranteeing that each job makes at least as much progress as it would do if we allocate its {\em \fair share}. 

\vheading{Notation}
\begin{itemize}
\item $s$: The number of machines or servers.
\item For each machine $i \in [s]$, we denote $G_i, C_i, M_i$ as the total GPU, CPU, and memory available on machine $i$.
\item We denote the total GPU available across all machines by $G$. That is, $G = \sum_{i} G_i$. Similarly, we denote $C, M$ as the total CPU and Memory capacity across all machines.
\item We denote jobs by indices $j$. The GPU requirement of job $j$ is denoted by $g_j$.
\item For each machine $i \in [s]$, we denote $C_{g}, M_{g}$ as the \fair allocation of CPU and memory. That is, $C_{g} = C_i/G_i * g_j$ and $M_{g} = M_i/G_i * g_j$.
\item $J_t$: The set of jobs that needs to be scheduled in each {\em round}.  $J_t$ is the set of runnable jobs for this round, identified by the scheduling policy such that the total GPU requirements of jobs in $J_t$ is at most the total GPU capacity of the cluster. In notation, $\sum_{j \in J_t} g_j \leq G$. 
\item $n$: We denote the number of jobs in the set $J_t$ by $n$. In notation, $n = |J_t|$.
\item $W_j$: We assume that resource sensitivity matrix for each job $j$ is given as input. $W_j[c, m]$ denote the amount of progress made on job $j$ per round if $c$ units of CPU and $m$ units of (RAM) memory are allocated to job $j$.
\item With a baseline \fair allocation strategy the progress a job makes in each round is equal to $W[C{g}, M{g}]$.
\end{itemize}

\subsubsection{Throughput Upperbound in an Optimal Solution}
It is not hard to show that our problem is NP-hard.
So, we resort to finding approximate solutions.
Towards that we first find an {\em upperbound} on the throughput achievable by an optimal solution.
We achieve that by formulating our problem as a linear program (LP).
Moreover, we assume an {\em idealized setting}: We assume that all the CPU and memory available across all the machines is present in one (super) machine.
That is, there is only one machine with $C$ units of CPU and $M$ units of memory.
Note that in reality $C$ units of CPU and $M$ units of memory are spread across $s$ machines.
This means that in our throughput allocation, we do not take into account the effect of network when resources are allocated to jobs across multiple machines.
Therefore, the true optimal solution of our problem can only do worse than the idealized allocation.

\subsubsection{An LP formulation}
We get an upperbound on the optimal allocation via an LP formulation.  
The variables of our LP are denoted by $y_{\{c,m,j\}}$, which should be interpreted as follows.
If for a job $j \in J_t$,  $y_{\{c,m,j\}} = 1$, then it means that in the LP solution $c$ units of CPU and $m$ units of memory are allocated.
We further note that for every job $j$, there is a variable $y_{\{c,m,j\}}$  for {\em for every possible} allocation of CPU and memory. We consider these variables in the discrete space as identified by our resource sensitivity matrix. 


\begin{itemize}
\item Our objective function is to maximize the throughput. We formulate it as follows using our LP variables.

\begin{equation}
\text{Maximize}  \quad \sum_{j \in J_t}  \sum_{[c,m]} W_j[c, m] \cdot y_{\{c,m,j\}}
\end{equation}

Now, we enforce constraints such that LP solution is feasible in the idealized setting we talked about.

\item First constraint we enforce is that the total CPU allocated to jobs is no more than the total capacity available:

\begin{equation}
 \sum_{j \in J_t}  \sum_{[c,m]} c \cdot y_{\{c,m,j\}}  \leq C
\end{equation}

\item Similarly, we make sure that the total memory allocated to jobs is no more than the total capacity available:

\begin{equation}
 \sum_{j \in J_t}  \sum_{[c,m]} m \cdot y_{\{c,m,j\}}  \leq M
\end{equation}

\item We want LP to allocate only one configuration of CPU and memory to each job.
\begin{equation}
\text{ For each job $j \in J_t$:} \quad  \sum_{[c,m]} y_{\{c,m,j\}}  = 1
\end{equation}

\item Finally, we want LP solution to be as good as the fair allocation.

\begin{equation}
\text{ For each job $j \in J_t$:} \quad  \sum_{[c,m]} W_j[c,m] \cdot y_{\{c,m,j\}}  \geq W_j[C_{g}, M_{g}] 
\end{equation}
\end{itemize}

\begin{theorem}
The throughput achieved by the LP (1-5) is at least the throughput achieved by an optimal solution to our problem.
\end{theorem}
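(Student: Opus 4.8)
The plan is to show that the LP (1--5) is a \emph{relaxation} of the original scheduling problem, so that its optimal value can only exceed the true optimal throughput. The argument is the standard ``embed any feasible solution'' step: I would take an arbitrary optimal solution $O$ to the original problem, in which each job $j \in J_t$ is assigned some concrete CPU amount $c^*_j$ and memory amount $m^*_j$, and exhibit a feasible point of the LP whose objective value equals the throughput of $O$.

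First I would define the candidate LP point by setting $y_{\{c^*_j, m^*_j, j\}} = 1$ and $y_{\{c,m,j\}} = 0$ for every other $(c,m)$ pair, for each job $j$. The bulk of the proof is then verifying that this point satisfies constraints (2)--(5). Constraint (4) holds immediately, since exactly one configuration is selected per job. For constraint (5) I would invoke the defining guarantee of the original problem -- every job in $O$ makes at least as much progress as under its \fair share -- which is exactly the statement $W_j[c^*_j, m^*_j] \geq W_j[C_g, M_g]$ for each $j$.

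The one step that needs care is the capacity constraints (2) and (3), and this is where the idealization enters. In the true problem the CPU and memory are spread over $s$ machines, each respecting its own bounds $C_i$ and $M_i$; the idealized LP pools them into a single super-machine of capacity $C = \sum_i C_i$ and $M = \sum_i M_i$. I would argue that any allocation feasible in the distributed setting automatically satisfies the pooled constraints: summing the per-machine capacity bounds over all $i$ yields $\sum_{j} c^*_j \leq C$ and $\sum_{j} m^*_j \leq M$. Hence the constructed point is LP-feasible with objective $\sum_{j} W_j[c^*_j, m^*_j]$, which is precisely the throughput of $O$. Because the LP maximizes over a feasible region containing this point, its optimum is at least the throughput of $O$; since $O$ was optimal, this proves the claim.

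The main obstacle -- really the only non-mechanical point -- is to be explicit that the LP weakens the problem in two independent ways (pooling all resources onto one machine, and permitting fractional $y$), and that neither weakening can decrease the attainable throughput. The pooling step in particular silently discards the network and locality penalties incurred when a job is split across machines, so I would emphasize that the true optimum can only do \emph{worse} than the idealized bound, never better; this is what makes the LP value a valid upper bound rather than merely an approximation.
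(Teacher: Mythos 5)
Your proposal is correct and follows essentially the same route as the paper's proof: embed the optimal solution $O$ as a feasible LP point by setting $y_{\{c^*_j, m^*_j, j\}} = 1$, so the LP optimum dominates the throughput of $O$. In fact your write-up is more careful than the paper's one-line verification (``clearly, this is a valid solution''), since you explicitly check the fair-share constraint (5) and justify the capacity constraints (2)--(3) via the pooling of per-machine bounds onto the super-machine.
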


\begin{proof}
Consider an optimal solution $O$ to our problem. Suppose job $j$ receives $c^*$ units of CPU and $m^*$ units of memory in $O$.
Then we define the following feasible solution to our LP (1-5): Set $y_{c^*, m^*, j} = 1$.
Clearly, this is a valid solution and satisfies constraints (1-4).
\end{proof}

It is easy to verify that the optimal solution for our problem defines a feasible solution to our LP.
On the other hand, as the LP solution can be fractional, in the sense $y_{\{c, m, j\}}$ variables can take fractional values, the optimum solution for LP can be no smaller than the true optimum solution, and thus always an upperboun on the throughput one can achieve for our problem.
By enforcing the integrality constraints on $y_{\{c, m, j\}}$ variables one can getting a tighter upper bound.
Indeed, in our experiments we solve this as a Integer Linear Program (ILP) where $y_{\{c,m,j\}}$ takes boolean values. For every job, we define the total CPU ($c^*_j$) and memory ($m^*_j$) allocated by the optimal ILP solution as follows.
	\begin{equation}
	\text{For each job $j$, define}   \quad c^*_j := c  \quad \text{if}  \quad y_{\{c,m,j\} == 1}. 
	\end{equation}
	
		\begin{equation}
	and   \quad m^*_j := m  \quad \text{if}  \quad y_{\{c,m,j\} == 1}. 
	\end{equation}

\subsubsection{Feasible Allocation on Multiple Machines}
\label{appendix-second-lp}

Recall that in the LP(1-5), we assumed that all the resources are present on a single machine.
However, in reality these resources are spread across multiple machines.
So, now we need to make an allocation taking into account this fact.
We achieve that by solving another linear program.

Now our goal is the following:

\begin{itemize}
\item For each job $j \in J_t$, allocate $g_j$ units GPU, $c^*_j$ units of CPU, and $m^*_j$ units of memory across $s$ machines such that each job is fully scheduled on a single machine. We call $(g_j, c^*_j, m^*_j)$ as the demand vector of job $j$.
\end{itemize}

Again, the above problem is an instance of multi-dimensional bin packing problem, so it is NP-hard. 
Instead, we try to reduce the number of jobs that get fragmented.
So, our new goal is:
\begin{itemize}

\item For each job $j \in J_t$, allocate $g_j$ units GPU, $c^*_j$ units of CPU, and $m^*_j$ units of memory across $s$ machines such that the number of jobs that get fragmented is minimized.

\end{itemize}

\vheading{A Feasible Allocation via Second LP}.
\label{sec:allocationLP}
The variables of the second LP are denoted by $x_{i,j}$.
Here index $i$ denotes the machine and $j$ denotes the job.
The variables $x_{i,j}$ are interpreted as follows:
if $x_{i,j} = 1$, it means that resources of job $j$ (that $g_j$ units of GPU, $c^*_j$ units of CPU, and $m^*_j$ units of memory) are allocated on machine $i$.

Now we are ready to find a feasible allocation minimizing the number of fragmented jobs.

\begin{itemize}
\item First constraint we enforce is that the total GPU allocated to jobs is no more than the total capacity available on the  machine:

\begin{equation}
 \text{For each machine $i$ in $[s]$:}  \quad \sum_{j \in J_t} g_j \cdot x_{i,j}  \leq G_i
\end{equation}

\item Next, we make sure that the total CPU allocated is no more than the total capacity available on the machine:

\begin{equation}
 \text{For each machine $i$ in $[s]$:}  \quad  \sum_{j \in J_t}  \sum_{[c,m]} c^*_j \cdot x_{i,j}  \leq C_i
\end{equation}

\item Similarly, we make sure that the total memory allocated is no more than the total capacity available on the machine:

\begin{equation}
 \text{For each machine $j$ in $[s]$:}  \quad  \sum_{j \in J_t}  \sum_{[c,m]} m^*_j \cdot x_{i,j}  \leq M_i
\end{equation}

\item We make sure that every job is allocated all the resources it demands:

\begin{equation}
 \text{For each job $j \in J_t$}  \quad  \sum_{i \in [s]}  x_{i,j} \geq 1
\end{equation}

\item Finally, We make sure that variables are positive.

\begin{equation}
 \text{For each job $j \in J_t$ and $i \in [s]$}  \quad x_{i,j} > 0
\end{equation}
\end{itemize}

Using linear programming theory, we now prove a structural property about our LP that states that most of the variables are integral.
\begin{theorem}
Suppose we assume that no job demands more CPU, GPU or memory available on a single machine.
Then, the total number of jobs that get fragmented in the LP (8-12) is at most 3s.
\end{theorem}

\begin{proof}
Let $\{x^*_{i,j}\}_{i,j}$ denote an optimal solution to the LP(8-12). 
We know from linear programming theory that for every LP, there is an optimal solution which is a {\em vertex} of the polytope~\cite{lp1,lp2,lp3}.
Let $P$ denote the set of positive variables in the LP solution. That is set of $x_{i,j}$ such that  $x_{i,j} > 0$. 
A vertex solution is defined by a linearly independent family of tight constraints.
A tight constraint means that in the LP solution a constraint is satisfied with an equality ($=$).
A tight constraint of the form $x_{i,j} = 0$, only leads to variables not in P.
Therefore, we only we need to consider tight constraints of the form (8), (9), (10), and (11).
Therefore, number of variables taking positive values in $P$ is bounded by 

\begin{equation}
|P|  \leq 3s + n
\end{equation}

The above equation is true because there is only 1 constraint of the type (8), (9), and (10) for each machine and there are $s$ machines.
Further more there is one constraint of type (10) and there are $n$ jobs.

Now let $N_1$ denote the number of jobs that got fragmented in the LP (8-12) solution.
Now each such job contributes at least 2 variables to $P$.
This implies,

\begin{equation}
2N_1 + (n - N_1) \leq |P| \leq 3s + n
\end{equation}

Therefore, $N_1 \leq 3s$, and it concludes the proof.

\end{proof}

\subsection{\revised{Extending to Heterogeneous GPU Clusters}}
\label{sec-appendix-heterogeneous}

\revised{Assume a heterogeneous GPU cluster consisting of several \textit{types} (or generations) of GPU machines.}

\subsubsection{Notation}
\begin{itemize}
\item $K$ : The set of different \textit{types} of machines, where each entry represents a homogeneous group of servers of the same SKU (GPU generation).
\item $s_i$: The number of machines of each type $i$ $\in$ $K$.
\item For each type of machine $i$, we denote $G_i, C_i, M_i$ as the GPU, CPU, and memory available on each machine of type  $i$.
\item We denote the total GPU available across all machines by $G$. That is, $G = \sum_{i} G_i * s_i$. Similarly, we denote $C, M$ as the total CPU and memory capacity across all machines.
\item We denote jobs by indices $j$. The GPU requirement of job $j$ is denoted by $g_j$.
\item $J_t$: The set of jobs that needs to be scheduled in each {\em round}.  We assume that $J_t$ is given to us as input. Moreover, the total GPU requirements of jobs in $J_t$ is at most the total GPU capacity of the cluster. In notation, $\sum_{j \in J_t} g_j \leq G$. 
\item $n$: We denote the number of jobs in the set $J_t$ by $n$. In notation, $n = |J_t|$.
\item (Heterogeneous Case) $W_{ij}$: We assume that throughput matrix for each job $j$ is given as input. $W_{ij}[c, r]$ denote the amount of progress made on job $j$ per round if $c$ units of CPU and $m$ units of (RAM) memory are allocated to job $j$ on machine type $i$.
\end{itemize}

\subsubsection{Problem Definition}
\revised{The goal is to allocate CPU and memory to each job so as to maximize the throughput, while guaranteeing that each job makes at least as much progress as it would do if we allocate its {\em fair share}. 
A naive fair allocation would evenly distribute $1/n$-fraction of each resource to every job. 
Therefore, progress a job makes in each round is equal to $W[C/n, M/n]$ if the cluster is homogeneous.
In the heterogeneous case, we assume that throughput achieved using an appropriate definition fair allocation is given to us by an oracle, for each round. This could be using a heterogeneity and fairness aware scheduler in literature~\cite{gavel, gandivafair}.
We denote this throughput by $W^{\text{Fair}}_j$. An additional constraint we impose in the heterogeneous setting is that, in a given round, a job cannot be split across two different \textit{types} of machines, due to the operational challenges involved in running a job across different GPU generations simultaneously. However, we do allow a job to run on different types of machines across rounds. Prior work on heterogeneity aware schedulers~\cite{gavel} make similar assumptions.}

\subsubsection{An LP formulation}
\revised{We get an upperbound on the optimal allocation via an LP formulation.  Similar to the homogeneous case, we assume an {\em idealized setting for the first LP}: We assume that all the CPU and memory available across all the machines of the same \textit{type} is present in one (super) machine. That is, there is one representative machine $i$ for every homogeneous cluster. The variables of our LP are denoted by $y_{\{c,m,i,j\}}$, which should be interpreted as follows. If for a job $j \in J_t$ and a super-machine of type $i$,  $y_{\{c,m,i,j\}} = 1$, then it means that in the LP solution $c$ units of CPU and $m$ units of memory are allocated on super-machine $i$. }


\begin{itemize}
\item Our objective function is to maximize the throughput. We formulate it as follows using our LP variables.

\begin{equation}
\text{Maximize}  \quad \sum_{j \in J_t}  \sum_{[c,m]} W_j[c, m, i, j] \cdot y_{\{c,m,i, j\}}
\end{equation}

Now, we enforce constraints such that LP solution is feasible in the fractional sense.

\item First constraint we enforce is that the total CPU allocated to jobs is no more than the total capacity available on each machine type $i$.

\begin{equation}
 \sum_{j \in J_t}  \sum_{[c,m]} c \cdot y_{\{c,m,i, j\}}  \leq C_i*s_i
\end{equation}

\item Similarly, we make sure that the total memory allocated to jobs is no more than the total capacity available:

\begin{equation}
 \sum_{j \in J_t}  \sum_{[c,m]} m \cdot y_{\{c,m,i, j\}}  \leq M_i*s_i
\end{equation}

\item We want LP to allocate only one configuration of CPU and memory to each job.
\begin{equation}
\text{ For each job $j \in J_t$:} \quad  \sum_{[c,m]} \left(\sum_{i}y_{\{c,m,i,j\}} \right)  = 1
\end{equation}

\item Finally, we want LP solution to be as good as naive fair allocation.

\begin{equation}
\text{ For each job $j \in J_t$:} \quad  \sum_{[c,m]}  \sum_{i} \left(W_j[c,m,i] \cdot y_{\{c,m,i,j\}} \right)  \geq  W^{\text{Fair}}_j
\end{equation}
\end{itemize} 

\revised{We solve this optimization problem as a Integer Linear Program (ILP) where $y_{\{c,m,i,j\}}$ takes boolean values, i.e., the job $j$ can be placed on only one super-machine of type $i$. }

\vheading{Improving utilization}. \revised{The above ILP may not be able to find solutions for all the jobs $j$ in the set $J_t$ without splitting them across machines of different types. This leaves spare GPUs in the cluster, which potentially other jobs in the wait queue could be assigned to, without violating our constraints. Therefore, we repeat the optimization problem above for the set of unassigned GPUs from the prior steps, and the next set of jobs $J_t'$ in the wait queue. We can repeat this process until either there are no more jobs in the wait queue, or there are no more unassigned GPUs in the cluster.}

\vheading{Feasible allocation on Multiple Machines}. \revised{The first LP described above identifies the resource allocation for a job on a super-machine of type $i$. While in reality, the resources in this machine are physically split across $s_i$ servers. This problem can be solved exactly as in the homogeneous case using a LP as described in \sref{appendix-second-lp}.}
\end{document}